\renewcommand{\orcidID}[1]
{\unskip\texorpdfstring{%
\href{https://orcid.org/#1}{\protect\includegraphics[width=10px]{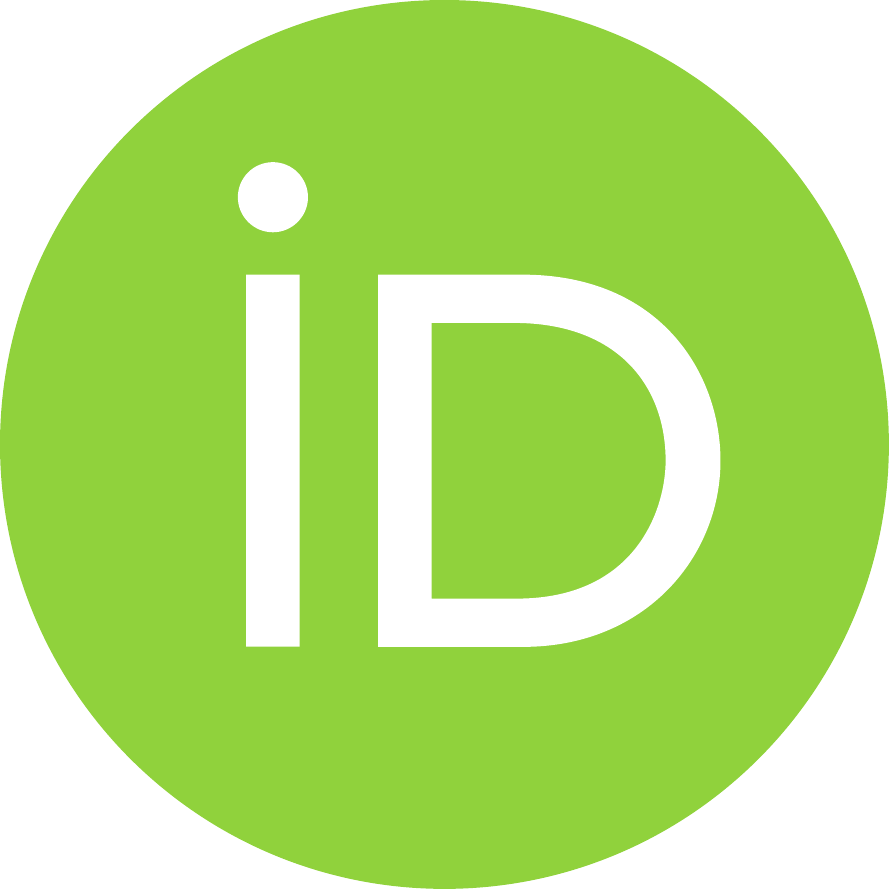}}}{}}
\title{Computing maximally-permissive strategies\\
  in acyclic timed automata\thanks{This work was partially funded by ANR project Ticktac (ANR-18-CE40-0015).}}
\author{Emily~Clement\inst{1,2} \orcidID{0000-0002-8105-665X} \and
  Thierry~J\'eron\inst{1}
  \orcidID{0000-0002-9922-6186}
  \and
  Nicolas~Markey\inst{1}
  \orcidID{0000-0003-1977-7525}
  \and
  David~Mentr\'e\inst{2}
  \orcidID{0000-0003-4315-0335}}
\institute{\leavevmode
  \inst{1} IRISA, Inria \& CNRS \& Univ. Rennes, France \\
  \email{firstname.lastname@inria.fr} \\\smallskip
  \inst{2} Mitsubishi Electric R\&D Centre Europe -- Rennes, France\\
  \email{initial-of-firstname.lastname@fr.merce.mee.com}}
\begin{document}
\maketitle

\begin{abstract}
Timed automata are a convenient mathematical model for modelling and
reasoning about real-time systems. While they provide a powerful way
of representing timing aspects of such systems, timed automata assume
arbitrary precision and zero-delay actions; in~particular, a~state
might be declared reachable in a timed automaton, but impossible to
reach in the physical system it models.

In this paper, we~consider \emph{permissive} strategies as a way to
overcome this problem: such strategies propose \emph{intervals of
  delays} instead of single delays, and aim at reaching a target state
whichever delay actually takes place. We~develop an algorithm for
computing the optimal permissiveness (and~an~associated
maximally-permissive strategy) in~acyclic timed automata and games.

\end{abstract}

\section{Introduction}
\label{sec-intro}

Timed automata~\cite{AD94} are a powerful formalism for modelling and
reasoning about real-time computer systems: they offer a convenient
way of modelling timing conditions (not relying on discretization)
while allowing for efficient verification algorithms; as~a
consequence, they have been widely studied by the formal-verification
community, and have been applied to numerous industrial case studies
thanks to advanced tools such as Uppaal~\cite{BDL+06}, TChecker~\cite{HPT19} or
Chronos~\cite{BDM+98}.

One drawback of timed automata is that they are a mathematical model,
assuming infinite precision in the measure of time; this does not
correspond to physical devices such as computers. As~a consequence,
properties that are proven to hold on the model may fail to hold on
any implementation. As~a very simple example, consider
two (or even infinitely-many) consecutive actions that
have to be performed at the exact same time: while this would be
possible in a mathematical model, this would not be possible on a
physical device.

Several approaches have attempted to address such problems, depending
on the property to be checked. When considering safety properties,
timing imprecisions may add new behaviours, which have to be taken
into account in the safety check. In~that setting, \emph{guard
enlargement}~\cite{Pur00,DDMR04} has been proposed as a way to model the
fact that some timing conditions might be considered true even if they
are (slightly) violated: the existence of an enlargement value for
which the set of executions is safe is decidable.  When dealing with
reachability properties, timing imprecisions may prevent a run to be
valid. A~topological approach has been proposed, where a state is
declared reachable only if there is a \emph{tube of trajectories}
reaching the target state~\cite{GHJ97}.
Game-based approaches have also been proposed, where a state is said
reachable if there is a strategy to reach this state when an opponent
player is allowed to modify (up~to a certain point) the values of the
delays~\cite{BMS15,BFM15}.

In this paper, we build on the approach of~\cite{BFM15}, where the
authors aim at computing \emph{maximally-permissive} strategies for
reaching a target state. While in classical timed automata,
reachability is witnessed by a sequence of delays and transitions
leading to a target state, here the aim is to propose \emph{intervals}
of delays, leaving it to an opponent player to decide which delay will
indeed take place. Of~course, the strategy has to be able to respond
to any choice of the opponent, eventually reaching the target
state.

We~can then have several ways of measuring permissiveness of a
strategy, the general idea being that larger intervals of delays are
preferred. In~\cite{BFM15}, each interval is associated with a penalty,
which is the inverse of the length of the interval. Penalties are
summed up along paths, and maximally-permissive strategies are those
having minimal worst-case penalty. This favours both large
intervals and short paths, but computing optimal strategies could only
be achieved in the case of one-clock timed automata in~\cite{BFM15}.

In the present paper, permissiveness of a strategy is defined as the
size of the smallest interval proposed by that strategy. We~develop an
algorithm to compute the permissiveness of any (winning) configuration
in acyclic timed automata and games, with any number of clocks.
Consider for instance a scheduling problem, where a number of tasks
have to be performed in a certain order within a given
delay. Classical reachability algorithms would just say whether a
given set of tasks are schedulable (in~the mathematical model); this
then requires launching some of the tasks at very precise dates, as
the computed schedule need not be correct if delays are slightly
modified.  Instead, our~algorithm could compute the permissiveness of
the best schedule, thereby measuring the amount of imprecision that
can be allowed, depending on the deadline by which all tasks have to
be finished.

This paper is organized as follows: in~Section~\ref{sec-defs},
we~introduce the necessary definitions, in particular of timed
automata and permissiveness of strategies, and prove basic
results. Section~\ref{sec-linear} is devoted to solving the case
of \emph{linear} timed automata, where all states have at most one
outgoing transition, thereby focusing only on choices of
delays. Section~\ref{sec-generic} extends this to acyclic
timed automata and games.

By lack of space, most of the proofs could not be included in this
version of the paper.
They can be found in the long version~\cite{arxiv-2007.01815} of this article.






\section{Definitions}
\label{sec-defs}

%

%

\subsection{Piecewise-affine functions} 

A~valuation for a set~$\clocks$ of variables is a mapping $v\colon
\clocks \to \bbR+$, assigning a nonnegative real value to each
variable. We~write~$\vzero$ for the valuation defined as $\vzero(c)=0$
for any~$c\in\clocks$. We~write $(\bbR+)^{\clocks}$ for the set of
valuations for~$\clocks$, which we~identify with $(\bbR+)^n$ when
$\clocks$ has exactly~$n$ variables.
We~write $\bbRbar$ for $\bbR\cup\{-\infty;+\infty\}$.

\begin{definition}
  An $n$-dimensional affine function is a mapping
  $f\colon \bbR+^n \to \bbRbar$ s.t.%
  \begin{itemize}
  \item either there exists a vector $(F_k)_{0\leq k\leq n}\in
    \bbR^{n+1}$ such that $f(v)=F_0+\sum_{1\leq i\leq n} F_i\cdot
    v_i$;
  \item or $f(v)=-\infty$ for all~$v\in\bbR+^n$; in that case we can
    still write $f(v)=F_0+\sum_{1\leq i\leq n} F_i\cdot v_i$, by
    setting $F_0=-\infty$ and $F_i=0$ for all~$1\leq i\leq n$;
  \item or $f(v)=+\infty$ for all~$v\in\bbR+^n$; similarly, this
    corresponds to setting $F_0=+\infty$ and $F_i=0$ for all~$1\leq
    i\leq n$.
  \end{itemize}
  A linear function~$f$ is an affine function for which~$f(\vzero)=0$.
\end{definition}


If $\Phi=(\phi_k)_{1\leq k\leq m}$ is a set of $n$-dimensional affine
functions and $b=(b_k)_{1\leq k\leq m}$ is a set of intervals,
we~write $\polyh{\Phi,b}$ for the intersection $\bigcap_{1\leq k\leq m}
\phi_k^{-1}(b_k)$. This defines a convex polyhedron of~$\bbR+^n$.

An $n$-dimensional piecewise-affine function is a mapping $f\colon
\bbR+^n \to \bbRbar$ for which there exists a
partition~$S=(S_k)_{1\leq k\leq m}$ of~$\bbR+^n$ into convex
polyhedra, and a family~$(f_k)_{1\leq k\leq m}$ of affine functions
such that for any~$x\in\bbR+^n$, writing~$k$ for the (unique) index
in~$[1;m]$ such that $x\in S_k$, it~holds $f(x)=f^k(x)$.

\subsection{Timed automata}

Given a valuation~$v$ and a nonnegative real~$d$,
we~denote with $v+d$ the valuation~$w$ such that
$w(c)=v(c)+d$ for all~$c\in\clocks$. For any
subset~$I\subseteq \bbR+$, we~write $v+I$ for the set of valuations
$\{v+d \mid d\in I\}$.
Given a valuation~$v$ and a subset~$r\subseteq\clocks$, we~write
$v[r\to 0]$ for the valuation~$w$ such that
$w(c)=0$ if $c\in r$ and $w(c)=v(c)$ if $c\notin r$.

The~set of linear constraints over~$\clocks$ is defined as
\(
\constr(\clocks) \ni g\coloncolonequals c \sim n \mid g\et g
\)
where $c$ ranges over~$\clocks$, $n$~ranges over~\bbN, and
$\mathord\sim\in\{\mathord<,\mathord\leq,\mathord=,\mathord\geq,
\mathord>\}$. That a~clock valuation~$v$ satisfies a clock
constraint~$g$, denoted $v\models g$ (and sometimes $v\in g$,
seeing~$g$ as a convex polyhedron), is defined inductively~as
\begin{xalignat*}2
  v\models c\sim n & \iff v(c) \sim n
  & v\models g_1\et g_2  & \iff v\models g_1 \text{ and } v\models g_2
\end{xalignat*}

For the rest of this paper, we~fix a finite alphabet~$\Alp$.
\begin{definition}[\cite{AD94}]
A \newdef{timed automaton} over~$\Alp$ is a tuple
$\calA=\tuple{\clocks,L,T,I}$ where $\clocks$ is a finite set of clocks,
$L$ is a finite set of states (or \newdef{locations}), and $T\subseteq
L\times \constr(\clocks)\times \Alp\times 2^{\clocks}\times L$ is a
finite set of transitions, and $I\colon S\to \constr(\clocks)$
define the invariant constraints in locations.
%
\end{definition}

A~\newdef{configuration} of a timed automaton is a pair~$(\loc,v)$ where
$\loc$~is a location of the automaton and $v$~is a clock valuation
such that $v\models I(\loc)$.
The~semantics of timed automata can be defined as an infinite-state
labelled transition system whose states are the set of configurations,
and whose transitions are of two kinds:
\begin{itemize}
\item \newdef{delay transitions} model time elapsing: no transitions
  of the timed automaton are taken, but the values of all clocks are
  augmented by the same value. For any configuration~$(\loc,v)$ and
  any delay $d\in \bbR+$, there is a transition $(\loc,v) \trans[d]
  (\loc,v+d)$, provided that $v+d\models I(\loc)$;
\item \newdef{action transitions} represent the effect of taking a
  transition in the timed automaton. For~any configuration~$(\loc,v)$
  and any transition~$t=(\loc,g,a,r,\loc')$, if $v\models g$, then
  there is a transition $(\loc,v) \trans[a] (\loc',v[r\to 0])$,
  provided that $v[r\to 0]\models I(\loc')$.
\end{itemize}
We~write $(\loc,v)\trans[d,a](\loc',v')$ when there exists $(\loc'',v'')$ such
that $(\loc,v)\trans[d](\loc'',v'')$ and $(\loc'',v'')\trans[a](\loc',v')$.
A~run of a timed automaton is a sequence of
configurations~$(\loc_i,v_i)_i$ such that there exists $d\in\bbR+$ and
$a\in\Alp$ such that $(\loc_i,v_i)\trans[d,a] (\loc_{i+1},v_{i+1})$ for
all~$i$.  Even if it means adding a sink state and corresponding
transitions, we assume that from any configuration, there always
exists a transition $\trans[d,a]$ for some~$d\in\bbR+$ and some
$a\in\Alp$. This way, any finite run can be extended into an infinite
run (in~terms of its number of transitions). We~also assume that, from any
location~$\loc$ and any action~$a$,
there is at most one transition from~$\loc$ labelled with~$a$.

One of the most basic problems concerning timed automata is that of
reachability of a location: given a timed automaton~$\Aut$, a~source
configuration~$(\loc_0,v_0)$ (usually assuming~$v_0=\vzero)$ and a
target location~$\loc_f$, it~amounts to deciding whether there exists
a run from~$(\loc_0,v_0)$ to some configuration~$(\loc_f,v_f)$ in the
infinite-state transition system defining the semantics
of~$\Aut$. This problem has been proven decidable
(and \PSPACE-complete) in the early 1990s~\cite{AD94}, using
\newdef{region equivalence}, which provides a finite-state automaton
that is (time-abstracted) bisimilar to the original timed automaton.


\subsection{Permissive strategies in timed automata}
Solving reachability using the algorithm above, we~can obtain a
sequence of delays and transitions to be taken for reaching the target
location. Playing this sequence of delays and transitions however
requires infinite precision in order to meet all timing constraints,
which might not be possible on physical devices.


In this paper, we~address this problem by building on the setting
studied in~\cite{BFM15}: in~that setting, the delays that are played
may be slightly perturbed, and it~can be required to adapt the future
delays (and possibly actions) so as to make sure that the target is
indeed reached. 


We~encode the imprecisions using a game setting: the player proposes
an interval of possible delays (together with the action to be
played), and its opponent selects, in the proposed interval,
the exact delay that will take place. 

Formally, in our setting, a \emph{move} from some
configuration~$(\loc,v)$ is a pair~$(I,a)$, where $I\subseteq \bbR+$
is a closed\footnote{We~only consider closed intervals here to
  simplify the presentation.}
interval, possibly right-unbounded,
and $a\in\Sigma$, such that there is a
transition~$(\loc,g,a,r,\loc')$ for which $v+I\subseteq g$ (\emph{i.e.}, for
any valuation $w\in v+I$, it~holds $w\models g$).  We~write
$\moves(\loc,v)$ for the set of moves from~$(\loc,v)$.

A~\newdef{permissive strategy} is a function~$\sigma$
mapping finite runs $(\loc_i,v_i)_{0\leq i\leq n}$ to moves in
$\moves(\loc_n,v_n)$.  A~run~$\rho=(\loc_i,v_i)_i$ is
\newdef{compatible} with a permissive strategy~$\sigma$ if, for any
finite prefix $\pi=(\loc_i,v_i)_{0\leq i\leq j}$ of~$\rho$,
$\sigma(\pi)$ is defined and, writing $\sigma(\pi)=(I,a)$, there
exists $d\in I$ such that
$(\loc_j,v_j)\trans[d,a](\loc_{j+1},v_{j+1})$.  A~permissive
strategy~$\sigma$ is winning from a given configuration~$(\loc_0,v_0)$
if any infinite run originating from~$(\loc_0,v_0)$ that is compatible
with~$\sigma$ is winning (which, in our setting, means that it visits
the target location~$\loc_f$).  Notice that classical strategies
(which propose single delays instead of intervals of delays) are
special cases of permissive strategies. It~follows that, as~soon
as there is a path
from some configuration~$(\loc,v)$ to~$\loc_f$, there exists a
winning permissive strategy from~$(\loc,v)$ (possibly proposing punctual
intervals). Such configurations are said winning, and the~\emph{winning
  zone} is the set of all winning configurations.

Our aim is to compute \emph{maximally-permissive} winning
strategies. In~this~work, we~measure the permissiveness of a strategy~$\sigma$
in a configuration~$(\loc,v)$, denoted $\Perm_\sigma(\loc,v)$, as
the length of the smallest interval it may return.
Formally:
\begin{definition}
Let~$\sigma$ be a permissive strategy, and $(\loc,v)$ be a
configuration of~$\calA$. The~permissiveness of~$\sigma$ in~$(\loc,v)$,
denoted with~$\Perm_\sigma(\loc,v)$, 
is defined as follows:%
\begin{itemize}
\item if~$\sigma$ is not winning from~$(\loc,v)$,
  the~permissiveness of~$\sigma$ in~$(\loc,v)$ is~$-\infty$;
\item otherwise, $\Perm_\sigma(\loc,v) = \inf\{|I| \mid
  \exists \pi.\ \sigma(\pi)=(I,a) \text{ for some~$a$}\}$. 
\end{itemize}
The~permissiveness of configuration~$(\loc,v)$ is then defined as
\[
\Perm(\loc,v)=\sup_{\sigma} \Perm_{\sigma}(\loc,v).
\]
\end{definition}
In this paper, we prove that $\Perm$~is a piecewise affine function,
and develop an algorithm for computing that function.
Intuitively, this corresponds to computing how much precision is
needed in order to reach the target configuration.


%
%

\begin{remark}\label{remark-related-work} 	
Notice that our definition of permissiveness is similar in spirit
with that of~\cite{BFM15}. However, in~\cite{BFM15}, each
move~$(I,a)$ was associated a penalty (namely~$1/|I|$), and
penalties are summed up along the execution. This tends to make the
player favour shorter paths with possibly small intervals (hence
demanding more accuracy when playing) over long paths with larger
intervals. Our setting only aims at maximizing the size of the
smallest interval to be played.

Our work can also be seen as a kind of quantitative extension of
\emph{tubes of trajectories} of~\cite{GHJ97}: permissiveness could
be seen as the minimal width of such a tube. However, we~are in a
game-based setting, and (except in Section~\ref{sec-linear}) the
strategy could suggest to take different transitions if they allow
for more permissiveness.

Finally, and perhaps more importantly, our setting is quite close to
that of~\cite{BMS15}, but with a more quantitative focus: we~aim at
computing the optimal permissiveness for all winning configurations
(with reachability objective), while only a global lower bound (of
the form~$1/m$ where $m$~is doubly-exponential in the size of the
input) is obtained in~\cite{BMS15}.  Similar results to those
of~\cite{BMS15} are obtained in~\cite{SBMR13, BMRS19} for B\"uchi
objectives; such extensions are part of our future work.

%
%
%
\end{remark}

\begin{remark}
  The term \emph{permissive} strategy is sometimes used to refer to
  \emph{non-deterministic}, returning the set of all moves that lead
  to winning configurations. In~particular,
  Uppaal-Tiga~\cite{BCDFLL07} can compute maximally-permissive
  strategies in that sense. But~this is only a local view of
  permissiveness, while our aim is to allow for high permissiveness
  all along the execution.
\end{remark}

\subsection{Iterative computation of permissiveness}

Towards computing $\Perm$, we define:
\begin{xalignat*}2
  \calP_i(\loc_f,v)&= +\infty &&
    \qquad\text{ for all valuations~$v$ and all~$i\geq 0$;} \\
 \calP_0(\loc,v)&= -\infty &&
    \qquad\text{ for all valuations~$v$, and for
  all~$\loc\not=\loc_f$;} \\
\calP_{i+1}(\loc,v) &\multicolumn{3}{l}{${}=\left\{\begin{array}{lr} 
   \multicolumn{2}{l}{\sup\limits_{\!\!(I,a)\in\moves(\loc,v)\!\!}
   \min(|I|, 
   \inf \{\calP_{i}(\loc',v')\mid
   \exists d\in I.\ (\loc,v)\trans[d,a](\loc',v')\})} \\[-2mm]
  & \text{ if } \moves( \loc , v ) \not= \varnothing \\[2mm]
  - \infty & \text{otherwise} 
  \end{array}\right.$}
\end{xalignat*}

In the rest of section, we prove some basic properties of this sequence of
functions, and in particular its link with permissiveness. The next
sections will be devoted to its computation on acyclic timed automata.

\medskip
Our first two results are concerned with the evolution of the sequence
with~$i$. They are proved by straightforward inductions.
\begin{restatable}{lemma}{lemmaincri}
\label{lemma-incri}
	For any $(\loc,v)$, the sequence $ \left(
        \calP_i(\loc,v)\right)_{ i \in \mathbb{N} }$ is nondecreasing.
\end{restatable}
\begin{restatable}{lemma}{lemmaconverge}
\label{lemma-converge}
  If the longest path from~$\loc$ to~$\loc_f$ has at most $i$ transitions,
  then for any~$v$ and any~$j\geq 0$,
  it~holds $\calP_{i+j}(\loc,v)=\calP_i(\loc,v)$.
\end{restatable}
%

\medskip
\noindent
The following lemma ties the link between the sequence~$(\calP_i)$ and
permissiveness:%
%
%
\begin{restatable}{proposition}{propinfty}
	\label{prop-infty}
  For any~$i\in\bbN$ and for any configuration~$(\loc,v)$, it~holds:
  \begin{enumerate}
  \item $\calP_i(\loc,v)=-\infty$ if, and only~if, there are no runs
    of length at~most~$i$ from~$(\loc,v)$ to~$\loc_f$;
  \item for any~$p\in\bbR+$, and any~$i\in\bbN$,
    it~holds $\calP_i(\loc,v)>p$
    if, and only~if, 
    there is a permissive strategy with
    permissiveness larger than $p$ that is winning from~$(\loc,v)$ within
    $i$~steps.
  \end{enumerate}

\end{restatable}

\begin{proof}
 We~begin with the first equivalence, which we prove by induction
 on~$i$.  The result is trivial for~$i=0$. Now, assume that the
 result holds up to index~$i$. There may be two reasons for having
 $\calP_{i+1}(\loc,v)=-\infty$ for some~$(\loc,v)$: either
 $\moves(\loc,v)$ is empty, or it is not empty and for any
 $(I,a)\in\moves(\loc,v)$, it~holds 
 \[
 \inf \{\calP_i(\loc',v') \mid \exists d\in I. (\loc,v)\xrightarrow{d,a}
   (\loc',v') \}=-\infty.
 \]
 This is true in particular when~$I=\{d\}$ is punctual: for
 any~$(d,a)$, the successor~$(\ell',v')$ such that
 $(\loc,v)\xrightarrow{d,a} (\loc',v')$ is such that
 $\calP_i(\loc',v')=-\infty$. From the induction hypothesis, there
 can be no path from those~$(\loc',v')$ to~$\loc_f$ with $i$ steps
 or~less. Hence there are no paths from~$(\loc,v)$ to~$\loc_f$ with
 at most~$i+1$ steps.

 Conversely, if there are no paths having at most $i+1$ steps
 from~$(\loc,v)$ to~$\loc_f$, then either this is because
 $\moves(\loc,v)=\varnothing$, or this is because all moves lead to a
 configuration from which there are no paths of length at
 most~$i$ to~$\loc_f$.
 By~induction hypothesis, all successor configurations have
 infinite~$\calP_i$, hence also ${\calP_{i+1}(\loc,v)=-\infty}$.

 \medskip

 We now prove the second claim, still by induction. The~base case is
 again trivial. Now, assume that the result holds up to some
 index~$i$.  We~fix some~$p\in\bbR+$, and first consider a
 configuration~$(\loc,v)$ with
 $\calP_{i+1}(\loc,v)>p$. This~entails that $\moves(\loc,v)$ is
 non-empty, and that 
 there is a move~$(I,a)$ with $|I|>p$
 such that $\calP_i(\loc',v')> p$ for all $(\loc',v')$ such that
 $(\loc,v)\xrightarrow{d,a} (\loc',v')$ with~$d\in
 I$. Applying the induction hypothesis,
 there is an $i$-step winning strategy with permissiveness
 larger than~$p$ from each successor configuration~$(\loc',v')$, from which
 we can build an $i+1$-step winning strategy with permissiveness larger
 than~$p$ from~$(\loc,v)$.

 Conversely, pick an $i+1$-step winning
 strategy~$\sigma_{p}$ from~$(\loc,v)$ with permissiveness
 larger than~$p$. Write~$\sigma_p(\loc,v)=(I_0,a_0)$.
 Then for any~$d\in I_0$, in the location~$(\loc',v')$ such that
 $(\loc,v)\xrightarrow{d,a_0} (\loc',v')$, strategy~$\sigma_p$ is an
 $i$-step winning strategy with permissiveness larger than~$p$, so that,
 following the induction hypothesis, 
 $\calP_i(\loc,v) > p$. It~immediately follows that
 $\calP_{i+1}(\loc,v) >p$. 
\end{proof}




\medskip
Our next three results focus on properties of the functions~$\calP_i$.
First, we~identify zones on which $\calP_i$ is constant. This will be
useful for proving correctness of our algorithm computing~$\calP_i$ in
the next section:
\begin{restatable}{lemma}{lemmainfini}
\label{lemma-infini}
  Let~$\calA$ be a timed automaton, with maximal constant~$M$.
  Let~$\loc$ be a location, and~$i\in\bbN$. Take
  two valuations~$v$ and~$v'$ such that, for any clock~$c$, we have
  either $v(c)=v'(c)$, or $v(c)>M$ and $v'(c)>M$.  
  Then $\calP_i(\loc,v)=\calP_i(\loc,v')$.
\end{restatable}


%
Next we prove that the functions $\calP_i$ are $2$-Lipschitz
continuous (on the zone where they take finite values):
\begin{restatable}{proposition}{propcontinuite}
	\label{prop-continuite}
	For any integer~$i\in\bbN$ and any location~$\loc$, the function
	$\tau_\loc\colon v\mapsto \calP_i(\loc,v)$ is $2$-Lipschitz on the set of
	valuations where it takes finite values.
\end{restatable}

Finally, the following lemma shows the (rather obvious) fact
that $\calP_i(\loc,v+t) \leq \calP_i(\loc,v)$. 
A~consequence of this property is that, for any
non-resetting transition, the~optimal choice for the opponent is the
largest delay in the interval proposed by the player\footnote{This also holds
for any transition in a one-clock timed automaton (because in case the
clock is reset, the new valuation does not depend on the delay chosen
by the opponent).}.
This~corresponds to the intuition that by playing
later, the opponent will force the player to react faster at the next
step. As~Example~\ref{ex-perm} below shows, this~is not the case in
general: in~that example,
from~$(\loc_0, \langle x=0;y=0\rangle)$,
if~the player proposes interval~$[1/4;1]$, the~optimal choice for the
opponent is~$d=1/4$).

\begin{restatable}{lemma}{lemmadecrdiag}
\label{lemma-decrdiag}
  Let $(\ell,v)$ be a configuration, $t\in\bbR+$ such that
  $(\ell,v+t)$ is a configuration of the automaton, and $i\in\bbN$.
  Then $\calP_i(\ell,v)-t \leq \calP_i(\ell,v+t) \leq
  \calP_i(\ell,v)$.

\end{restatable}
%
%



\begin{example}\label{ex-perm}
  Consider the automaton of Fig.~\ref{fig-experm}. We~compute the optimal
permissiveness (and~corresponding strategies) for this small
example. First, $\calP_i(\loc_f,v)=+\infty$ for all~$i$, and
$\calP_0(\loc_0,v)=\calP_0(\loc_1,v)=-\infty$.

\begin{figure}[!ht]
  \centering
  \begin{tikzpicture}
    \begin{scope}
      \draw (0,0) node[rond,bleu] (a) {$\loc_0$};
      \draw (4,0) node[rond,rouge] (b) {$\loc_1$};
      \draw (8,0) node[rond,vert] (c) {$\loc_f$};
      \draw (a) edge[draw,-latex'] node[above] {$\begin{array}{>{\scriptstyle}c}
          0\leq x\leq 1 \\ 0\leq y\leq 1
        \end{array}$} node[below] {$\scriptstyle y:=0$} (b);
      \draw (b) edge[draw,-latex'] node[above] {$\begin{array}{>{\scriptstyle}c}
          1\leq x\leq 2 \\ 0\leq y\leq 1
        \end{array}$} (c);
    \end{scope}
    \begin{scope}[xshift=-1.6cm,yshift=-3.6cm,scale=2.3]
      \fill[fill=blue!40!white] (0,0) -| (.5,.5) -- cycle;
      \shade[left color=blue!40!white, right color=blue!15!white]
        (.5,0) -- (.5,.5) -- (1,1) |- cycle;
      \shade[shading=axis,bottom color=blue!45!white,top color=blue!20!white,shading angle=45]
      (0,0) -- (.5,.5) -- (0,1) -- cycle;
      \shade[bottom color=blue!40!white,top color=blue!15!white]
      (0,1) -- (.5,.5) -- (1,1) -- cycle;

      \foreach \x in {.6,.7,.8,.9} 
        {\draw[blue!70!green!40!white] (\x,0) -- (\x,\x) -- (1-\x,\x) -- (0,2*\x-1);}

      \draw (.37,.15) node {$\frac12$};
      \path (.75,.3) node {$\scriptstyle 1-x$};
      \draw (.5,.85) node {$\scriptstyle 1-y$};
      \draw (.22,.5) node {$\frac{1+x-y}2$};
      \draw (.6,1.1) node {$\scriptstyle -\infty$};
      \draw (1.2,.5) node {$\scriptstyle -\infty$};

      \draw[latex'-latex'] (1.2,0) node[right] {$x$} -| (0,1.2) node[left] {$y$};
      \foreach \x in {0,1}
           {\draw (\x,0) -- +(0,-.05) node[below] {$\scriptstyle \x$};
            \draw (0,\x) -- +(-.05,0) node[left] {$\scriptstyle \x$};
            \draw[dotted] (\x,0) -- +(0,1.2); \draw[dotted] (0,\x) -- +(1.2,0);
           }
      \draw (0,0) -- (1,1); \draw (.5,0) -- (.5,.5) -- (0,1);
      \draw (1,0) |- (0,1);

    \end{scope}
    \begin{scope}[xshift=3.6cm,yshift=-3.6cm,scale=2.3]
      \shade[shading=axis, top color=red!0!white, bottom color=red!35!white, shading angle=45]
      (0,0) -| (1,1) -- cycle;
      \shade[shading=axis, bottom color=red!40!white, top color=red!15!white]
      (1,0) |- (2,1) -- cycle;
      \shade[shading=axis, left color=red!40!white, right color=red!15!white]
      (1,0) -| (2,1) -- cycle;
      \draw[latex'-latex'] (2.2,0) node[right] {$x$} -| (0,1.2) node[left] {$y$};

      \foreach \x in {.2,.4,.6,.8} 
        {\draw[red!40!yellow] (\x,0) -- (1,1-\x) -- (2-\x,1-\x) -- (2-\x,0);}

      \path (.65,.3) node {$\scriptstyle x-y$};
      \path (1.75,.3) node {$\scriptstyle 2-x$};
      \path (1.3,.7) node {$\scriptstyle 1-y$};
      \draw (.6,1.1) node {$\scriptstyle -\infty$};
      \draw (2.2,.5) node {$\scriptstyle -\infty$};
      \draw (.3,.7) node {$\scriptstyle -\infty$};

          \foreach \x in {0,1}
           {\draw (\x,0) -- +(0,-.05) node[below] {$\scriptstyle \x$};
            \draw (0,\x) -- +(-.05,0) node[left] {$\scriptstyle \x$};
            \draw[dotted] (\x,0) -- +(0,1.2);
            \draw[dotted] (0,\x) -- +(2.2,0);
           }
           \draw (2,0) -- +(0,-.05) node[below] {$\scriptstyle 2$};
           \draw[dotted] (2,0) -- +(0,1.2);
      \draw (0,0) -- (1,1) -- (2,1) -- (1,0) -- cycle; 
      \draw (1,1) -- (1,0) -| (2,1);
\end{scope}
  \end{tikzpicture}
  \caption{A linear timed automaton and its permissiveness at~$\loc_0$ and~$\loc_1$}
  \label{fig-experm}
\end{figure}
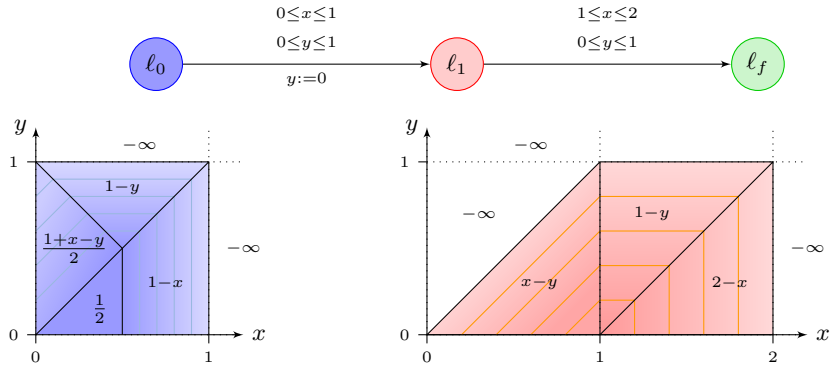

We first focus on~$\loc_1$, with some valuation~$v$: obviously,
if~$v(x)>2$ or~$v(y)>1$, the set~$\moves(\loc_1,v)$ is empty, and
$\calP_1(\loc_1,v)=-\infty$ in that case; similarly if $v(y)>v(x)$.
Since $\calP_0(\loc_f,v)$ does not depend on~$v$, the~optimal move for
the player is the largest possible interval satisfying the
guard:
\begin{itemize}
\item if $v(x)\leq 1$ and $v(y)\leq 1$ (and $v(x)\leq v(y)$), the
  optimal interval of delays is $[1-v(x); 1-v(y)]$, whose length is
  $v(x)-v(y)$;
\item if $v(y)\leq 1\leq v(x)\leq 2$, the transition is immediately
  available, so that the lower bound of the interval will be~$0$. For
  the upper bound, there are two cases:
  \begin{itemize}
  \item if $v(y)\geq v(x)-1$, the optimal interval is $[0;1-v(y)]$;
  \item if $v(y)\leq v(x)-1$, the optimal interval is $[0;2-v(x)]$.
  \end{itemize}
\end{itemize}
This defines the permissiveness for~$\loc_1$.

\medskip
We now look at~$\loc_0$: first, $\calP_1(\loc_0,v)=-\infty$ for
all~$v$, and only configurations~$(\loc_0,v)$ where $v(x)\leq 1$ and
$v(y)\leq 1$ are winning, so that $\calP_2(\loc_0,v)=-\infty$ as soon
as $v(x)>1$ or $v(y)>1$. Fix a valuation~$v$ for which $v(x)\leq 1$
and $v(y)\leq 1$. We~have to find the interval~$I=[\alpha,\beta]$ such
that $v(x)+\beta\leq 1$ and $v(y)+\beta\leq 1$, and for which
$\min\{\beta-\alpha, \inf_{\gamma\in[\alpha,\beta]}
\calP_1(\loc_1,(v+\gamma)[y\to 0])\}$ is maximized. Noticing that
$(v+\gamma)[y\to 0]$ is the valuation $(x\mapsto v(x)+\gamma; y\mapsto
0)$,
and that $\calP_1(\ell_1,w)=w(x)$ for any~$w$ satisfying $w(x)\in [0;1]$ and
$w(y)=0$,
we~have to maximize \( \min \{\beta-\alpha,
\inf_{\gamma\in[\alpha,\beta]} v(x)+\gamma\} \) over the domain
defined by $0\leq\alpha\leq\beta\leq \min(1-v(x);1-v(y))$.
Obviously, $\inf_{\gamma\in[\alpha,\beta]} v(x)+\gamma=v(x)+\alpha$,
so we have to maximize \( \min \{\beta-\alpha, v(x)+\alpha\} \)
on the set $\{(\alpha,\beta) \mid 0\leq\alpha\leq\beta\leq
\min(1-v(x);1-v(y))\}$.

\smallskip
We~consider two cases:
\begin{itemize}
\item if $v(y)\leq v(x)$: clearly, it is optimal to maximize~$\beta$, so we let
  $\beta=1-v(x)$. Hence we have to maximize $\min\{1-(v(x)+\alpha),
  v(x)+\alpha\}$ over $0\leq\alpha\leq 1-v(x)$. Again, there are two
  cases, depending on whether $v(x)$ is larger or smaller
  than~$1/2$; in~the former~case, $\min\{1-(v(x)+\alpha),
  v(x)+\alpha\} = 1-v(x)-\alpha$ when $\alpha$ ranges over $[0;
    1-v(x)]$; it~is maximized for~$\alpha=0$, and we get
  $\calP_2(\loc_0,v)=1-v(x)$.  If~$v(x)\leq 1/2$, the maximal value is
  reached when $\alpha=1/2-v(x)$, and $\calP_2(\loc_0,v)=1/2$.
\item if $v(y)\geq v(x)$: then it is optimal to let
  $\beta=1-v(y)$. Again there are two cases for maximizing
  $\min\{1-v(y)-\alpha, v(x)+\alpha\}$: if $1-v(y)\leq v(x)$, then
  $\alpha=0$ is optimal, and $\calP_2(\loc_0,v)=1-v(y)$; otherwise,
  $\alpha=(1-v(x)-v(y))/2$ is optimal, and
  $\calP_2(\loc_0,v)=(1-v(y)+v(x))/2$.
\end{itemize}
We~end up with the diagram represented on the left of
Fig.~\ref{fig-experm} (where for the sake of readability we~write $x$
and~$y$ in place of~$v(x)$ and~$v(y)$).
\end{example}

Our~aim in the rest of this paper is to compute the sequence of
functions~$\calP_i$, and to evaluate the complexity of this computation.
Following Lemma~\ref{lemma-converge}, this will provide us with an
algorithm for computing permissiveness in acyclic timed automata.




\section{Computing optimal strategies in linear timed automata}
\label{sec-linear}
%

In this section, we consider the simpler case of linear timed
automata, where each location has at most one successor.

\subsection{Optimal strategy for the opponent}

We~begin with focusing on the optimal choice of the opponent: given a
configuration~$(\loc,v)$ and an interval~$I$ of delays proposed by the
player
(there is a single outgoing transition, so the
action to be played is fixed), what is the best delay that the
opponent will choose so as to minimize the permissiveness of the
resulting configuration?

As we already mentioned, Lemma~\ref{lemma-decrdiag} answers this question
for non-resetting transitions: for such transitions,
the best option for the opponent
is to choose the maximal delay in the interval proposed by the player.
On~the other hand,
Example~\ref{ex-perm} provides a situation where the opponent prefers
to play as early as possible.

It~turns out that, for linear timed automata, the
optimal choice of the opponent is always one of these two extremal choices.
This property will be a corollary of the following lemma, stating
concavity of the  permissiveness function in linear timed automata:
\begin{restatable}{proposition}{propconcave}
  \label{prop-concave}
  Let~$i\in\bbN$.  Let $\loc$ be a location of a linear timed
  automaton, let $v_1$ and~$v_2$ be two clock valuations such that
  $\calP_i(\loc,v_1)$ and $\calP_i(\loc,v_2)$ are finite.
  Let~$\lambda\in[0;1]$, and $v_\lambda=\lambda\cdot v_1 +
  (1-\lambda)\cdot v_2$.  Then
\[
\calP_i(\loc,v_{\lambda}) \geq \lambda\cdot\calP_i(\loc,v_1) + (1-\lambda)\cdot\calP_i(\loc,v_2).
\]
\end{restatable}

The aim of the opponent being to select the valuation
in~$V=\{v+\delta[r\to 0] \mid 0\leq \delta\leq d\}$ that minimizes the
permissiveness. Writing~$v_1=v[r\to 0]$ and $v_2=v+d[r\to 0]$, we~have
$V=\{\lambda v_1 + (1-\lambda)v_2 \mid 0\leq \lambda\leq 1\}$.
Proposition~\ref{prop-concave} entails that the permissiveness is
minimized either in~$v_1$ or in~$v_2$. This~corresponds to our claim
that the best choice for the opponent always is to select one of the bounds
of the interval proposed by the player.

\begin{restatable}{corollary}{corostratPII}
  \label{coro-stratP2}
Let $\loc$ be a location of a linear timed automaton, $v$ and~$v'$ be two
clock valuations, $\lambda\in[0;1]$, and $v_\lambda=\lambda\cdot v +
(1-\lambda)\cdot v'$.
Then for all~$i$:
\[
\calP_i(\loc,v_{\lambda}) \geq \min\{\calP_i(\loc,v),\calP_i(\loc,v')\}.
\]
In particular, for any valuation~$v$, any bounded interval~$[\alpha,\beta]$, and any transition $\loc \trans[g,a,r] \loc'$:
\[
\inf \{\calP_i(\loc',v')\mid
\exists d\in [\alpha,\beta].\ (\loc,v)\trans[d,a](\loc',v')\}
= \min \{\calP_i(\loc',v'_\alpha), \calP_i(\loc',v'_\beta)\}
\]
where
$(\loc,v)\trans[\alpha,a](\loc',v'_\alpha)$ and
$(\loc,v)\trans[\beta,a](\loc',v'_\beta)$.
\end{restatable}

\subsection{Computing the most-permissive strategy}

Now that we have a better understanding of the optimal strategy of the
opponent, we can compute the most-permissive strategy of the player
for reaching the target location~$\loc_f$. We~prove that for all~$i$,
$\calP_i$~is in fact a piecewise-affine function that can be computed
in doubly-exponential time.

First notice that, following Lemma~\ref{lemma-converge}, for any
location~$\loc$ of a linear timed automaton with $n$ locations, the
sequence of functions $(\calP_i)_i$ converges in at most~$n$ steps.

\begin{theorem}\label{thm-main}
The permissiveness function for a linear timed automaton with $d$
locations and $n$ clocks is a piecewise-affine function.
It~can be computed in
time $O((n+1)^{8^d})$.
\end{theorem}

The following technical lemma
will be the central tool in the computation of~$\calP_i$:
\begin{restatable}{lemma}{lemmatech}
\label{lemma-technique}\label{lemma-tech}
Let $m_\alpha\leq M_\alpha$ and $m_\beta\leq M_\beta$, and
$D=\{(\alpha,\beta)\in\bbR+^2\mid m_\alpha\leq \alpha\leq M_\alpha,\
 m_\beta\leq \beta\leq M_\beta,\ \alpha\leq\beta\}$.
Let $f\colon \alpha\mapsto a\alpha+b$
and $g\colon \beta\mapsto c\beta+d$ be two 1-dimensional
affine functions, and
$\mu\colon (\alpha,\beta)\mapsto \min\{\beta-\alpha, f(\alpha), g(\beta)\}$.
Then the maximal value that $\mu$ may take over~$D$ is of one of the
following five forms: $M_\beta-m_\alpha$, $\lambda\cdot f(\nu)$,
$\lambda\cdot g(\mu)$, $\frac{ad-bc}{a-c}$ and
$\frac{ad-bc}{(a+1)(1-c)-1}$, with
$\lambda\in\{1,\frac1{1-c}, \frac1{a+1}\}$ and
$\nu\in\{m_\alpha,M_\alpha,m_\beta,M_\beta\}$.
This value can be computed by checking inequalities between expressions of
the same forms.
%
%
%
\end{restatable}

The following lemma corresponds to one step of our inductive
computation of~$\calP_i$:
%
\begin{restatable}{lemma}{lemmalinperm}
	\label{lemma-linperm}
  Let $\Aut$ be a linear timed automaton with~$n$ clocks.  Let
  $(\loc,g,a,z,\loc')$ be a transition of~$\Aut$, and assume that
  $v\mapsto \calP_{i-1}(\loc',v)$ is piecewise affine, with $m$ cells.
  Then $v\mapsto \calP_i(\loc,v)$ is piecewise affine. It~can be
  computed in time~$O(m^4\cdot (m+n)^4)$. It~can be defined using a
  polyhedral partition of size $O(m^4 \cdot(m+n)^4)$, and with coefficients
  polynomial in those of~$\calP_{i-1}$.
%
\end{restatable}

\begin{proof}
%
%
%
%
  We~assume that $\calP_{i-1}(\loc',v)$ is not
constantly~$-\infty$ (if~it were the case, then also
$\calP_i(\loc,v)=-\infty$ for all~$v$). Similarly, we~assume that
$\moves(\loc,v)$ is non-empty for some~$v$.
Since $v\mapsto \calP_{i-1}(\loc',v)$ is piecewise-affine:
we~can then fix a
polyhedral partition $\polyh{\Phi,P}$ and, for each cell~$h$ in this
partition, an affine functions $f_h$, such that
$\calP_{i-1}(\loc',v)=f_h(v)$ for the only cell~$h$ containing~$v$.

\begin{figure}[b]
  \centering

  \begin{tikzpicture}[scale=.7]
    \begin{scope}
      \draw[latex'-latex'] (5,0) -| (0,5);
      \draw[vert] (3,3) -- (4,2) -- (4.75,4) -- (4,4) -- cycle;
      \path (3.9,2.5) node {$h_\beta$};
      \draw[bleu] (1,1) -- (1,1.5) -- (2,3) -- (2.5,1.5) -- cycle;
      \path (1.8,2.2) node {$h_\alpha$};
      \draw[dashed,opacity=.5] (5,4) -- (1,0);
      \draw[dashed,opacity=.5] (5,5) -- (0,0);
      \draw[rouge,dashed,line width=1pt,opacity=.5] (0,0) -- (2.25,2.25) --
      (2.5,1.5) -- (1,0) -- cycle;
      \draw[rouge,dashed,line width=1pt,fill=none] (0,0) -- (2.25,2.25) --
      (2.5,1.5) -- (1,0) -- cycle;
      \draw (1,.5) edge[out=-20,in=-160,latex'-]  node[pos=1,right]
            {$S_{(h_\alpha,h_\beta)}$} (3,.5);
    \end{scope}
    \begin{scope}[xshift=6cm]
      \draw[latex'-latex'] (5,0) -| (0,5);
      \draw[vert] (3,3) -- (4,2) -- (4.75,4) -- (4,4) -- cycle;
      \path (3.9,2.5) node {$h_\beta$};
      \draw[bleu] (1,1) -- (1,1.5) -- (2,3) -- (2.5,1.5) -- cycle;
      \path (1.8,2.2) node {$h_\alpha$};
      \draw[dashed,opacity=.5] (5,5) -- (0,0);
      \draw[rouge,dashed,line width=1pt,opacity=.15,fill=none] (0,0) -- (2.25,2.25) --
      (2.5,1.5) -- (1,0) -- cycle;
      \draw[dashed,opacity=.5] (5,4.25) -- (.75,0);
      \draw[dashed,opacity=.15] (5,4) -- (1,0);
      \draw[rouge,dashed,line width=1pt,opacity=.5] (0,0) -- (2.25,2.25) --
      (2.25,1.5) -- (.75,0) -- cycle;
      \draw[rouge,dashed,line width=1pt,fill=none] (0,0) -- (2.25,2.25) --
      (2.25,1.5) -- (.75,0) -- cycle;
      \draw (1,.5) node[inner sep=0pt,fill=black,minimum size=3pt,circle]
      (v) {} node[below left] {$v$};
      \draw[dashed,opacity=.5] (v) -- +(4,4);
      \draw[line width=1pt,bleu] (1.75,1.25) node[inner sep=0pt,bleu,minimum size=3pt,circle] {} -- (2.375,1.875) node[inner sep=0pt,bleu,minimum size=3pt,circle] {} node[midway,coordinate] (b) {};
      \draw[line width=1pt,vert] (3.25,2.75) node[inner sep=0pt,vert,minimum size=3pt,circle] {} -- (4.5,4) node[inner sep=0pt,vert,minimum size=3pt,circle] {}
      node[midway,coordinate] (g) {};
      \draw (b) edge[latex'-,out=-60,in=180]
        node[right,pos=1] {$I^v_{\alpha}$} (3,1);
      \draw (g) edge[latex'-,out=120,in=0]
        node[left,pos=1] {$I^v_{\beta}$} (2.5,4);
    \end{scope}
    \begin{scope}[xshift=12cm]
      \draw[latex'-latex'] (5,0) -| (0,5);
      \draw[vert] (3,3) -- (4,2) -- (4.75,4) -- (4,4) -- cycle;
      \path (3.9,2.5) node {$h_\beta$};
      \draw[bleu] (1,1) -- (1,1.5) -- (2,3) -- (2.5,1.5) -- cycle;
      \path (1.8,2.2) node {$h_\alpha$};
      \draw[dashed,opacity=.5] (5,4.25) -- (.75,0);
      \draw[dashed,opacity=.15] (5,4) -- (1,0);
      \draw[dashed,opacity=.5] (5,5) -- (0,0);
      \draw[rouge,dashed,line width=1pt,opacity=.5] (0,0) -- (2.25,2.25) --
      (2.25,1.5) -- (.75,0) -- cycle;
      \draw[rouge,dashed,line width=1pt,fill=none] (0,0) -- (2.25,2.25) --
      (2.25,1.5) -- (.75,0) -- cycle;

      \draw (1,.5) node[inner sep=0pt,fill=black,minimum size=3pt,circle]
      (v) {} node[below left] {$v$};
      \draw[dashed,opacity=.5] (v) -- +(4,4);
      \draw[line width=1pt,bleu]
      (1.75,1.25) node[inner sep=0pt,bleu,minimum size=3pt,circle] {};
      \draw[line width=1pt,vert] (3.25,2.75) node[inner sep=0pt,vert,minimum size=3pt,circle] {};
      \draw[opacity=.5,black,line cap=round,line width=3pt] (1.75,1.25) -- (3.25,2.75) node[pos=.7,coordinate] (m) {};
      \draw (m) edge[latex'-,out=-30,in=90]  node[pos=1,below,text width=2cm,scale=.6,align=center]
        {interval to be played} (3.5,1.2);
    \end{scope} 
  \end{tikzpicture}
        
  \caption{Three steps of our procedure:
    $S_{(h_\alpha,h_\beta)}$; then compute expressions for
    $I^v_\alpha$ and $I^v_\beta$ (notice that we had to refine
    $S_{(h_\alpha,h_\beta)}$, because the expression for $I^v_\beta$
    would be different for the lower part of $S_{(h_\alpha,h_\beta)}$ since
    it ends of a different facet of~$h_\beta$); 
    finally select best values for~$\alpha$ and~$\beta$.}
  \label{fig-procedure}
\end{figure}
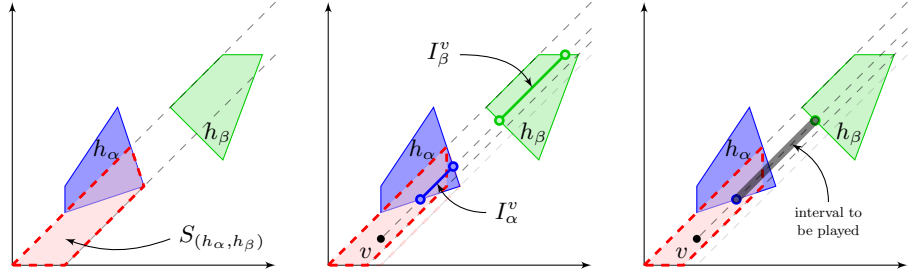

Our procedure for computing~$\calP_i$ in~$\loc$ consists in listing
the possible pairs of cells defining~$\calP_{i-1}$ in~$\loc'$ where
the left- and right-bounds of the interval to be proposed lie.
For each pair~$(h_\alpha,h_\beta)$ of such cells,
we~perform the following three steps (illustrated on
Figure~\ref{fig-procedure}):
\begin{itemize}
\item characterize the set~$S_{(h_\alpha,h_\beta)}$ of all
  valuations from which those cells can be
  reached by taking the transition from~$\loc$ to~$\loc'$. We~compute
  this polyhedron using quantifier elimination;
\item compute the ranges for~$\alpha$ and~$\beta$ that
  can be played in order to indeed end up respectively in~$h_\alpha$
  and~$h_\beta$. These are intervals~$I_\alpha$ and $I_\beta$, whose
  bounds are expressed as functions of~$v$.
  Computing these bounds may require refining the polyhedron obtained at
  the previous step into several subpolyhedra, in order to
  express them as affine functions of~$v\in
  S_{(h_\alpha,h_\beta)}$;
\item for each subpolyhedron, compute the optimal values for~$\alpha$
  and~$\beta$: following Corollary~\ref{coro-stratP2}, this amounts to
  find values for~$\alpha\in I_\alpha$ and~$\beta\in I_\beta$ that
  maximize the following function:
  \[
  \mu\colon (\alpha,\beta) \mapsto \min\{\beta-\alpha;
  \calP_{i-1}(\loc',(v+\alpha)[z\to 0]);
  \calP_{i-1}(\loc',(v+\beta)[z\to 0])  \} . 
  \]
  This is performed by applying our technical Lemma~\ref{lemma-tech};
  it may again require another refinement of the subpolyhedra, and
  returns an affine function for each subpolyhedron.
\end{itemize}
For each pair~$(h_\alpha,h_\beta)$, we~end up with a (partial)
piecewise-affine function, defined on $S_{(h_\alpha,h_\beta)}$,
returning the optimal permissiveness that can be obtained if playing
interval~$[\alpha,\beta]$ such that taking the transition to~$\loc'$
after delay~$\alpha$ (resp.~$\beta$) leads to~$h_\alpha$
(resp.~$h_\beta$).
Our final step to compute~$\calP_i$ in~$\loc$ consists in
taking the maximum of all these partial functions on their (possibly
overlapping) domains; this may introduce on more refinement of our
polyhedron.

Notice that all these computations are performed symbolically
w.r.t~$v$: we~manipulate affine functions of~$v$, with conditions
on~$v$ for our computations to be valid.
\end{proof}

Assuming that $v\mapsto \calP_{i-1}(\loc',v)$ has $m$ cells,
computing~$v\mapsto \calP_i(\loc,v)$ takes time~$O(m^4\cdot (m+n)^4)$,
where $n$ is the number of clocks, and this function has $O(m^4\cdot
(m+n)^4)$ many cells. 

It~follows that, for a linear timed automaton having~$d$ locations,
we~obtain the permissiveness function in the initial state as a
piecewise-affine function in time~$O((n+1)^{8^d})$, which proves
Theorem~\ref{thm-main}.

This complexity is quite high, but it~is a rough approximation.

In~Appendix~\ref{app-example-long}, we~develop a complete computation
of~$\calP_i$ on the linear timed automaton of Fig.~\ref{fig-experm2}
(which only differs from the example of Fig.~\ref{fig-experm} in the guard of
the first transition); in~this computation, we have many intermediary
cases to handle, but the final function~$\calP_2$ in~$\loc_0$,
depicted on Fig.~\ref{fig-permexlong}, has a partition with only four
cells (in~the winning~zone).
\begin{figure}[!ht]
  \centering
  \begin{tikzpicture}
    \begin{scope}
      \draw (0,0) node[rond,bleu] (a) {$\loc_0$};
      \draw (4,0) node[rond,rouge] (b) {$\loc_1$};
      \draw (8,0) node[rond,vert] (c) {$\loc_f$};
      \draw (a) edge[draw,-latex'] node[above] {$\begin{array}{>{\scriptstyle}c}
          0\leq y\leq 1
        \end{array}$} node[below] {$\scriptstyle y:=0$} (b);
      \draw (b) edge[draw,-latex'] node[above] {$\begin{array}{>{\scriptstyle}c}
          1\leq x\leq 2 \\ 0\leq y\leq 1
        \end{array}$} (c);
    \end{scope}
  \end{tikzpicture}
  \caption{Automaton of Fig.~\ref{fig-experm} where the guard on the
    first transition has been slightly extended}
  \label{fig-experm2}

  \begin{tikzpicture}
    \begin{scope}[xshift=-1cm,yshift=-3cm,scale=3]
      \fill[fill=blue!40!white] (.33,0) -| (.66,.33) -- cycle;
      \shade[left color=blue!40!white, right color=blue!15!white]
        (.66,0) -- (.66,.33) -- (2,1) |- cycle;
      \shade[shading=axis,bottom color=blue!45!white,top color=blue!20!white,shading angle=45]
      (0,0) -- (.33,0) -- (.66,.33) -- (0,1) -- cycle;
      \shade[bottom color=blue!40!white,top color=blue!15!white]
      (0,1) -- (.66,.33) -- (2,1) -- cycle;

      \path[use as bounding box] (0,1.2) -- (0,-.1);

      \begin{scope}
      \path[clip] (0,0) rectangle (2.2,1.2);
      \foreach \x in {
          1,
          1.33,
          1.66,
          2} 
        {\draw[blue!70!green!40!white] (\x,0) -- (\x,\x/2) -- (1-\x/2,\x/2) -- (0,\x-1);}
      \end{scope}

            \draw[latex'-latex'] (2.2,0) node[right] {$x$} -| (0,1.2)
        node[left] {$y$};
      \begin{scope}
        \path[clip] (-.5,-.5) rectangle (2.2,1.2);
      \foreach \x in {0,1,2}
           {\draw (\x,0) -- +(0,-.05) node[below] {$\scriptstyle \x$};
            \draw (0,\x) -- +(-.05,0) node[left] {$\scriptstyle \x$};
            \draw[dashed] (\x,0) -- +(0,1.2);
            \draw[dashed] (0,\x) -- +(2.2,0);
           }
      \draw (.33,0) -- (.66,.33) -- (.66,0);
      \draw (0,1) -- (.66,.33) -- (2,1);
      \end{scope}

      \draw (.58,.1) node {$\frac23$};
      \path (1.35,.3) node {$\scriptstyle 1-\frac x2$};
      \draw (1,.8) node {$\scriptstyle 1-y$};
      \draw (.22,.4) node {$\frac{1+x-y}2$};
      \draw (.9,1.1) node {$\scriptstyle -\infty$};
      \draw (2.1,.5) node {$\scriptstyle -\infty$};
    \end{scope}
  \end{tikzpicture}
  \caption{Permissiveness function of the automaton of Fig.~\ref{fig-experm2}
  in~$\loc_0$}
  \label{fig-permexlong}
\end{figure}

%
%
%
%






\section{Extension to acyclic timed automata and games}
\label{sec-generic}
\subsection{Adding branching}

We extend the previous study to the case of acyclic timed automata
(with branching).
%
In that case, we can still apply our inductive approach, with a few
changes: at each step, we would compute the optimal move of the player
for each single action, and then select the optimal action by
``superimposing'' the resulting permissiveness functions and selecting
the action that maximizes permissiveness. This however
breaks the result of Prop.~\ref{prop-concave}:
the maximum of two concave functions
need not be concave. Example~\ref{ex-nonconcave}, derived from
Example~\ref{ex-perm}, displays an example where the permissiveness
function is not concave.

\begin{example}\label{ex-nonconcave}
  Consider the automaton of Fig~\ref{fig-exnonconcave}.  The
  transition from~$\loc_0$ to~$\loc_f$ has the same constraint as that
  from~$\loc_1$ to~$\ell_f$; hence the permissiveness offered by that
  action is the same as the one from~$\loc_1$, which we already
  computed. Hence the global permissiveness from~$\loc_0$ is the
  (pointwise) maximal of the two piecewise-affine functions displayed
  on Fig.~\ref{fig-experm}, which is depicted on
  Fig.~\ref{fig-exnonconcave}. On this diagram, the blue area
  corresponds to points from where it is better (or only possible) to
  go via~$\loc_1$, while the red area corresponds to valuations from
  where it is better (or only possible) to take the bottom transition.

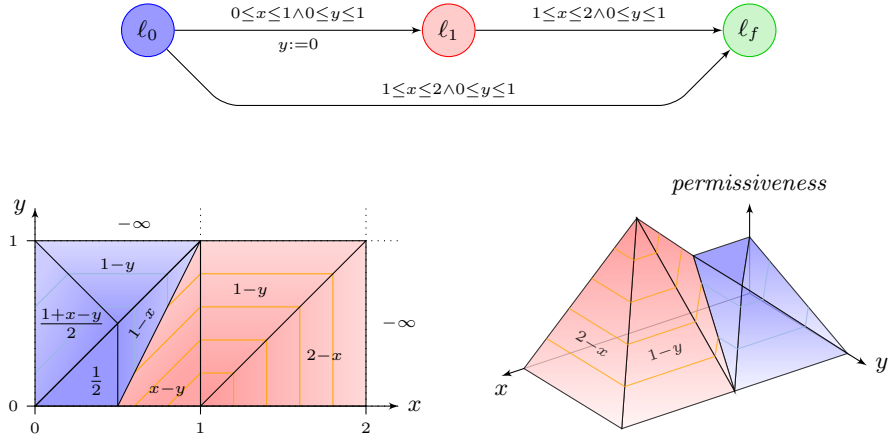
\begin{figure}[!ht]
  \centering
  \begin{tikzpicture}
    \begin{scope}
      \draw (0,0) node[rond,bleu] (a) {$\loc_0$};
      \draw (4,0) node[rond,rouge] (b) {$\loc_1$};
      \draw (8,0) node[rond,vert] (c) {$\loc_f$};
      \everymath{\scriptstyle}
      \draw (a) edge[draw,-latex'] node[above] {$0\leq x\leq 1 \wedge 0\leq y\leq 1$} node[below] {$\scriptstyle y:=0$} (b);
      \draw (b) edge[draw,-latex'] node[above] {$1\leq x\leq 2 \wedge 0\leq y\leq 1$} (c);
      \draw[rounded corners=2mm,-latex'] (a) -- +(1,-1) -- node[above] {$1\leq x\leq 2 \wedge 0\leq y\leq 1$} ($(c)+(-1,-1)$) -- (c);
    \end{scope}
    \begin{scope}[xshift=-1.5cm,yshift=-5cm,scale=2.2]
      \shade[shading=axis, top color=red!0!white, bottom color=red!35!white, shading angle=45]
      (0,0) -| (1,1) -- cycle;
      \shade[shading=axis, bottom color=red!40!white, top color=red!15!white]
      (1,0) |- (2,1) -- cycle;
      \shade[shading=axis, left color=red!40!white, right color=red!15!white]
      (1,0) -| (2,1) -- cycle;

      \foreach \x in {.2,.4,.6,.8} 
        {\draw[red!35!yellow] (\x,0) -- (1,1-\x) -- (2-\x,1-\x) -- (2-\x,0);}

      \fill[fill=blue!40!white] (0,0) -| (.5,.5) -- cycle;
      \shade[left color=blue!40!white, right color=blue!15!white]
        (.5,0) -- (.5,.5) -- (1,1) -- (.5,0) -- cycle;
      \shade[shading=axis,bottom color=blue!45!white,top color=blue!20!white,shading angle=45]
      (0,0) -- (.5,.5) -- (0,1) -- cycle;
      \shade[bottom color=blue!40!white,top color=blue!15!white]
      (0,1) -- (.5,.5) -- (1,1) -- cycle;

      \foreach \x in {.6,.8} 
        {\draw[blue!70!green!40!white] (\x,2*\x-1) -- (\x,\x) -- (1-\x,\x) -- (0,2*\x-1);}

      \draw[latex'-latex'] (2.2,0) node[right] {$x$} -| (0,1.2) node[left] {$y$};
      \foreach \x in {0,1}
           {\draw (\x,0) -- +(0,-.05) node[below] {$\scriptstyle \x$};
            \draw (0,\x) -- +(-.05,0) node[left] {$\scriptstyle \x$};
            \draw[dotted] (\x,0) -- +(0,1.2);
            \draw[dotted] (0,\x) -- +(2.2,0);
           }
           \draw (2,0) -- +(0,-.05) node[below] {$\scriptstyle 2$};
           \draw[dotted] (2,0) -- +(0,1.2);

      \draw (.37,.15) node {$\frac12$};
      \path (.3,0)--(1,1) node[midway,sloped] {$\scriptstyle 1-x$};
      \draw (.5,.85) node {$\scriptstyle 1-y$};
      \draw (.22,.5) node {$\frac{1+x-y}2$};
      \path (.8,.1) node {$\scriptstyle x-y$};
      \path (1.75,.3) node {$\scriptstyle 2-x$};
      \path (1.3,.7) node {$\scriptstyle 1-y$};
      \draw (.6,1.1) node {$\scriptstyle -\infty$};
      \draw (2.2,.5) node {$\scriptstyle -\infty$};
      \draw (0,0) -- (1,1) -- (2,1) -- (1,0) -- cycle; 
      \draw (1,1) -- (1,0) -| (2,1);
      \draw (0,0) -- (1,1);
      \draw (.5,0) -- (.5,.5) -- (0,1);
      \draw (1,0) |- (0,1);
      \draw (.5,0) -- (1,1);
    \end{scope}

    \begin{scope}[xshift=8cm,yshift=-3.5cm,scale=1]
    \begin{scope}[x={(-1.5cm,-.5cm)},y={(1.3cm,-.8cm)},z={(0cm,1.5cm)}]
    \draw[-latex'] (0,0,0) -- +(2.2,0,0) node[below] {$x$};
    \draw[-latex'] (0,0,0) -- +(0,1.2,0) node[right] {$y$};
    \draw[-latex'] (0,0,0) -- +(0,0,.8) node[above] {\textit{permissiveness}};
    \begin{scope}[opacity=.8]
    \draw[fill=blue!45!white] (0,0,.5) -- (.5,.5,.5) -- (.5,0,.5) -- cycle;
    \shade[draw,bottom color=blue!5!white,top color=blue!45!white]
      (0,0,.5) -- (.5,.5,.5) -- (0,1,0) -- cycle;
    \shade[draw,bottom color=blue!5!white,top color=blue!45!white]
      (0,1,0) -- (.5,.5,.5) -- (1,1,0) -- cycle;
    \shade[draw,bottom color=blue!5!white,top color=blue!45!white]
    (1,1,0) -- (.5,0,.5) -- (.5,.5,.5) -- cycle;
    \shade[draw,bottom color=red!5!white,top color=red!50!white]
    (1,1,0) -- (.5,0,.5) -- (1,0,1) -- cycle;
    \shade[draw,bottom color=red!5!white,top color=red!50!white]
    (1,1,0) -- (1,0,1) -- (2,1,0) -- cycle;
    \shade[draw,bottom color=red!5!white,top color=red!50!white]
    (1,0,1) -- (2,1,0) -- (2,0,0) -- cycle;
    \everymath{\scriptstyle}
    \path (1.7,0,.3) -- (1.7,.7,.3) node[midway,sloped] {$2-x$};
    \path (1,.7,.3) -- (1.7,.7,.3) node[midway,sloped] {$1-y$};
    \foreach \x in {.6,.8} 
      {\draw[blue!70!green!40!white] (\x,2*\x-1,1-\x) -- (\x,\x,1-\x) -- (1-\x,\x,1-\x) -- (0,2*\x-1,1-\x);}
    \foreach \x in {.2,.4} 
      {\draw[red!35!yellow] (1-\x,1-2*\x,\x) -- (1,1-\x,\x) -- (2-\x,1-\x,\x) -- (2-\x,0,\x);}
    \foreach \x in {.6,.8} 
      {\draw[red!35!yellow] (\x,0,\x) -- (1,1-\x,\x) -- (2-\x,1-\x,\x) -- (2-\x,0,\x);}
    \end{scope}
    \end{scope}
    \end{scope}
  \end{tikzpicture}
  \caption{A timed automaton and its (non-concave) permissiveness
    function in~$\loc_0$}
  \label{fig-exnonconcave}
\end{figure}
\end{example}


We prove by induction that the permissiveness functions still are
piecewise-affine in that setting.  Hence all four steps of our proof
of Lemma~\ref{lemma-linperm} still apply, with some adaptations.
For~each location~$\loc$, for~each transition~$t$ from~$\loc$ to some~$\loc'$,
the~procedure now is as follows:
\begin{itemize}
\item for the first step, we~again consider two cells~$h_\alpha$
  and~$h_\beta$ in the partition defining~$\calP_{i-1}(\loc')$,
  together with a set~$H$ of cells that will be visited
  between~$h_\alpha$ and~$h_\beta$. Again applying Fourier-Motzkin,
  we~get a polyhedron~$S_{(h_\alpha,h_\beta,H)}$ of valuations from
  which those cells can indeed be visited;
\item the computation of the intervals~$I^v_\alpha$ and $I^v_\beta$ is
  unchanged;
\item for each cell~$h\in H$, we~can compute the values~$d_h^{\sfin}$
  and $d_h^{\sfout}$ for which $(v+d_h^{\sfin})[z\to 0]$ enters~$h$ and
  $(v+d_h^{\sfout})[z\to 0]$ leaves~$h$ (notice that this may require
  further refinement of the polyhedron being considered). Since
  $\calP_{i-1}$ is affine on cell~$h$, it~reaches its maximum on this
  cell either at $(v+d_h^{\sfin})[z\to 0]$ or at $(v+d_h^{\sfout})[z\to
    0]$. The~function we need to maximize now looks like
  \begin{multline*}
  \mu'\colon (\alpha,\beta) \mapsto \min(\{\beta-\alpha,
  \calP_{i-1}(\loc',(v+\alpha)[z\to 0]),
  \calP_{i-1}(\loc',(v+\beta)[z\to 0])\} \cup \\
  \{\calP_{i-1}(\loc',(v+d_h^{\sfin})[z\to 0]),
  \calP_{i-1}(\loc',(v+d_h^{\sfout})[z\to 0]) \mid h\in H\}).
  \end{multline*}
  Now, we notice that all values in the second set are constant, not
  depending on~$\alpha$ and~$\beta$. We~can thus still apply
  Lemma~\ref{lemma-tech} in order to maximize $\mu(\alpha,\beta)$, and
  then take the above constants into account (which may again refine
  the polyhedra). 
\item the above three steps have to be performed for all outgoing
  transitions from the location~$\loc$ being considered. The~last step
  still consists in selecting the maximum of all the resulting functions.
\end{itemize}

The complexity of our procedure is much higher than that of linear
automata: because we consider sets of cells already at the first step,
we~have $O(2^m)$ sets to consider. Assuming that~$\calP_{i-1}$ is made
of $m$~cells, we~may end up with $\calP_i$ having more than $2^m$ cells.
Since we~have to repeat this procedure  up to~$|T|$ times, so that
the time complexity is in~$O(^{|T|}2)$ (where $^na$ is tetration).
Hence our procedure is
non-elementary in the worst case.
In~the end:
\begin{theorem}
  The permissiveness function for acyclic timed automata
  is piecewise affine. It~can be computed in non-elementary time.
\end{theorem}

\subsection{Adding uncontrollable states}
We~finally extend our approach to (acyclic) two-player turn-based
timed games.

This~setting is easily seen to preserve piecewise-affineness of the
permissiveness function.  Indeed, in~order to compute~$\calP_i$ in a
location~$\loc$ belonging to the opponent, it~suffices to first
compute the functions~$\calP_{i}^{\loc\to\loc'}$ for all outgoing
transitions from~$\loc$ to some~$\loc'$; this follows the same
procedure as above, and results in a piecewise-affine function,
assuming (inductively) that $\calP_{i-1}$~is piecewise affine.
We~then compute the (still piecewise-affine)
minimum~$\calM_{i}(\loc,v)$ of all those functions, and finally
\[
\calP_i(\loc,v) = \min_{\substack{d \text{ s.t. }\\ v+d\models \sfInv(\loc)}}
\calM_{i}(\loc,v+d)
\]
which is easily computed and remains piecewise-affine.
The~computation for locations that belong to the player is similar as
in the case of plain timed automata.
It follows:
\begin{theorem}
  The permissiveness function for acyclic turn-based timed games 
  is piecewise affine, and can be computed in non-elementary time.
\end{theorem}


\section{Conclusions and perspectives}
\label{sec-conclu}
\looseness=-1
In this paper, we addressed the problem of measuring the amount of
precision needed in a timed automaton to reach a given target
location. We~built on the formalism of permissive strategies defined
in~\cite{BFM15}, and developed an algorithm for computing the optimal
permissiveness in acyclic timed automata and games. 

\looseness=-1
There are several directions in which we will extend this work: as a
first task, we will have a closer look at the complexity of our
procedure, trying to either find examples where the number of cells
indeed grows exponentially (for linear timed automata) or
exponentially at each step (for acyclic timed automata). A~natural
continuation of our work consists in tackling cycles. We~were unable
to prove our intuition that there is no reason for the player to
iterate a cycle.  Following~\cite{BMRS19}, we~might first consider
fixing a timed automaton made of a single cycle, study how
permissiveness evolves along one run in this cycle, and compute the
optimal permissiveness for being able to take a cycle forever.
Exploiting 2-Lipschitz continuity of the permissiveness function,
we~could also develop approximating techniques, both for making our
computations more efficient in the acyclic case and to handle cycles.
%
Finally, other interesting directions include extending our approach
to linear hybrid automata, or considering a
stochastic opponent, thereby modelling the fact that perturbations
need not always be antagonist.

\bibliographystyle{alpha}
\bibliography{biblio}

\clearpage
\appendix

\section{Proofs of Section~\ref{sec-defs}}
\label{lemma-proofs}
This section is devoted to the proofs of the lemma of Section~\ref{sec-defs}.

\lemmaincri*
\begin{proof}
	The proof is by induction. For any configuration~$(\loc,v)$, either
	$\calP_0(\loc,v)=\calP_1(\loc,v)=+\infty$, or
	$\calP_0(\loc,v)=-\infty$. In~both cases, $\calP_0(\loc,v)\leq
	\calP_1(\loc,v)$.
	
	Then, assuming $\calP_i(\loc,v)\leq \calP_{i+1}(\loc,v)$ for
	all~$(\loc,v)$, we~directly get the same property at step~$i+1$. The
	result follows.
\end{proof}


\lemmaconverge*
\begin{proof}
	By~induction on~$i$: for~$i=0$, only~$\loc_f$ satisfies the condition,
	and the result holds by definition of~$\calP_i$ for~$\loc_f$.
	
	Now, assume that the result holds for some index~$i$, and consider a
	location~$\loc$ such that the longest path to~$\loc_f$ has at
	most~$i+1$ transitions. Then any successor location~$\loc'$ of~$\loc$
	has longest path of length at most~$i$, hence
	$\calP_i(\loc',v)=\calP_{i+1}(\loc',v)$. It~immediately follows that
	$\calP_{i+1}(\loc,v)=\calP_{i}(\loc,v)$ for any~$v$. 
\end{proof}

\lemmainfini*

\begin{proof}
	The hypotheses ensure that any action- and delay transition
	performed from~$(\loc,v)$ can be performed from~$(\loc,v')$, and the
	resulting configurations still satisfy the conditions of the lemma.
	The~result follows by~induction.
\end{proof}

\lemmadecrdiag*
\begin{proof}
	For any move~$(I,a)$ that is available from~$(\ell,v+t)$, the~move
	$(I+t,a)$ is available from~$(\ell,v)$. Moreover, the set of
	valuations on which $\calP_{i-1}$ is minimized is the same in both cases,
	namely $\{(v+t[z\to 0]) \mid d\in I\}$. 
	It~follows that $\calP_i(\ell,v+t) \leq \calP_i(\ell,v)$.
	
	Conversely, for any move~$(I,a)$ available from~$(\ell,v)$ with 
	$|I|\geq t$ (if~any), the move $((I-t)\cap \bbR+,a)$ is a valid
	(non-empty) move from~$(\ell,v+t)$. The second inequality follows.
\end{proof}

Finally, to prove proposition \ref{prop-continuite}, we use the following lemmas:


   \begin{lemma} \label{lemma-I}
     Let~$v$ and~$v'$ be two clock valuations. Write
     $\eta=\norm{v'-v}$. If $([\alpha,\beta],a)\in\moves(\loc,v)$ with
     $\beta-\alpha\geq 2\eta$, then $([\alpha+\eta,\beta-\eta],a) \in
     \moves(\loc,v')$.
   \end{lemma}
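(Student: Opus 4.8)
The plan is to unfold the definition of $\moves$ and reduce the claim to a coordinate-wise comparison between the two configurations. Recall that $([\alpha,\beta],a)\in\moves(\loc,v)$ means that letting any amount of time $t\in[\alpha,\beta]$ elapse from $(\loc,v)$ keeps the invariant of $\loc$ satisfied throughout and enables an $a$-labelled edge at the end; concretely, $v+t$ satisfies the guard of that edge for every $t\in[\alpha,\beta]$, and in particular the two endpoints $v+\alpha$ and $v+\beta$ do. The goal is the analogous statement for $v'$ on the shrunken interval, so I fix an arbitrary delay $s\in[\alpha+\eta,\beta-\eta]$ — nonempty precisely because $\beta-\alpha\ge 2\eta$ — and argue that $v'+s$ witnesses the same move.

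The key step is the following sandwiching inequality, established for each clock $x$ separately. Since $\eta=\norm{v'-v}$ bounds $\lvert v'(x)-v(x)\rvert$ by $\eta$ for every clock $x$, and since $s\in[\alpha+\eta,\beta-\eta]$,
\[
  v(x)+\alpha \;=\; (v(x)-\eta)+(\alpha+\eta) \;\le\; v'(x)+s \;\le\; (v(x)+\eta)+(\beta-\eta) \;=\; v(x)+\beta .
\]
In words, the value of each clock in $v'+s$ lies between its values in $v+\alpha$ and in $v+\beta$.

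From here the conclusion is immediate, using that every atomic clock constraint bounds a single clock. Writing such a constraint as $x\bowtie c$: if it is a lower bound (so $\bowtie$ is ${>}$ or ${\ge}$) the left endpoint gives $v(x)+\alpha\bowtie c$, and combining this with $v'(x)+s\ge v(x)+\alpha$ yields $v'(x)+s\bowtie c$; if it is an upper bound the right endpoint plays the symmetric role, and this same upper bound keeps the invariant of $\loc$ satisfied along every intermediate delay in $[0,s]$. Note this argument handles strict inequalities for free, since a non-strict step composed with a strict constraint stays strict. Hence $v'+s$ satisfies guard and invariant, so $([\alpha+\eta,\beta-\eta],a)\in\moves(\loc,v')$.

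The main obstacle — and the only place the hypothesis $\beta-\alpha\ge2\eta$ is genuinely used — is ensuring the $\pm\eta$ perturbation of the valuation is absorbed by the $\mp\eta$ contraction of the interval; the sandwiching above is exactly the bookkeeping that makes this work, and it simultaneously forces the shrunken interval to be nonempty. I should double-check that $\moves$ records \emph{closed} intervals, so that the endpoints $v+\alpha$ and $v+\beta$ really do satisfy the constraints (otherwise one would argue by a limit, which the closed-interval reading renders unnecessary), and that guards are rectangular: a diagonal constraint $x-y\bowtie c$ is unaffected by delay but could shift by as much as $2\eta$ under the perturbation and break, so the proof as sketched relies on the standard assumption that constraints are over single clocks.
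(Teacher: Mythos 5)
Your proof is correct and takes essentially the same route as the paper's: both reduce the claim to the per-clock sandwiching $v(x)+\alpha \leq v'(x)+s \leq v(x)+\beta$ for $s\in[\alpha+\eta,\beta-\eta]$ and conclude by convexity of the rectangular clock constraints (the paper phrases this as interval containment, $v'(c)+[\alpha+\eta,\beta-\eta]\subseteq J$ whenever $v(c)+[\alpha,\beta]\subseteq J$, rather than per atomic constraint, but the computation is identical). Your added caveats about closed intervals, diagonal-free guards, and invariants are sensible sanity checks on the definition of $\moves$, not deviations from the argument.
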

   \begin{proof}
     For any clock~$c$, $0\leq v(c)+\alpha = v'(c)+\alpha
     +(v(c)-v'(c)) \leq v'(c)+\alpha+\eta$. Similarly, $v(c)+\beta
     =v'(c)+\beta +(v(c)-v'(c)) \geq v'(c)+\beta-\eta$. Then for any
     interval~$J$, if $v(c)+[\alpha,\beta] \subseteq J$, and also
     $v'(c)+[\alpha+\eta,\beta-\eta]\subseteq J$. It~follows that for
     any guard~$g$, if $v+[\alpha,\beta]\subseteq g$ with
     $\beta-\alpha \geq 2\norm{v'-v}$, then $v'+ [\alpha+\norm{v'-v},
       \beta-\norm{v'-v}] \subseteq g$. 
   \end{proof}

   \begin{corollary}\label{coro-I}
     For any integer~$i\in\bbN$ and any location~$\loc$, the function
     $\nu_\loc\colon v\mapsto \sup_{(I,a)\in\moves(\loc,v)} |I|$
     is $2$-Lipschitz continuous
     on the set $\{v \mid \moves(\loc,v)\not=\varnothing\}$.
   \end{corollary}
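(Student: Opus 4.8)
The plan is to derive $2$-Lipschitz continuity from the two symmetric one-sided estimates $\nu_\loc(v')\geq\nu_\loc(v)-2\eta$ and $\nu_\loc(v)\geq\nu_\loc(v')-2\eta$, where $\eta=\norm{v'-v}$ and $v,v'$ range over the domain $\{v\mid\moves(\loc,v)\neq\varnothing\}$. Since $\norm{v'-v}=\norm{v-v'}$, the two estimates are obtained from one another by exchanging $v$ and $v'$, so it suffices to prove the first; combining them gives $|\nu_\loc(v')-\nu_\loc(v)|\leq 2\norm{v'-v}$, which is exactly the assertion. (The index~$i$ plays no role here, as $\nu_\loc$ does not depend on it.)

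To prove $\nu_\loc(v')\geq\nu_\loc(v)-2\eta$, I would fix an arbitrary move $([\alpha,\beta],a)\in\moves(\loc,v)$ and show $\nu_\loc(v')\geq(\beta-\alpha)-2\eta$, then pass to the supremum over all such moves. Two cases arise. If $\beta-\alpha\geq 2\eta$, then Lemma~\ref{lemma-I} applies verbatim and produces $([\alpha+\eta,\beta-\eta],a)\in\moves(\loc,v')$; this interval has length $(\beta-\alpha)-2\eta$, so by definition of the supremum $\nu_\loc(v')\geq(\beta-\alpha)-2\eta$. If instead $\beta-\alpha<2\eta$, Lemma~\ref{lemma-I} is inapplicable, and here I would invoke the standing hypothesis $\moves(\loc,v')\neq\varnothing$: since every move contributes a nonnegative interval length, $\nu_\loc(v')\geq 0>(\beta-\alpha)-2\eta$. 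Thus the desired inequality holds in both cases.

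Taking the supremum over $([\alpha,\beta],a)\in\moves(\loc,v)$ then yields $\nu_\loc(v')\geq\nu_\loc(v)-2\eta$; no attainment of the supremum is required, since the bound $\nu_\loc(v')\geq(\beta-\alpha)-2\eta$ holds uniformly over the moves of $v$ and therefore survives the passage to the supremum. I expect the only genuinely subtle point to be the short-interval case $\beta-\alpha<2\eta$: there Lemma~\ref{lemma-I} gives nothing, and the argument must fall back on nonemptiness of $\moves(\loc,v')$ to keep $\nu_\loc(v')$ bounded below by~$0$. The remaining steps — the supremum passage and the symmetry argument — are routine.
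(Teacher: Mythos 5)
Your proof is correct, and while it rests on the same two ingredients as the paper's proof (Lemma~\ref{lemma-I} plus the symmetry of $\norm{v'-v}$), it organizes the supremum argument differently, and arguably better. Where the paper fixes $\varepsilon>0$ and picks a near-optimal interval $I$ with $|I|\geq\nu_\loc(v)-\varepsilon$, you prove the uniform per-move bound $\nu_\loc(v')\geq(\beta-\alpha)-2\eta$ for \emph{every} $([\alpha,\beta],a)\in\moves(\loc,v)$ and then pass to the supremum; this eliminates the $\varepsilon$ bookkeeping, and it also quietly repairs a small imprecision in the paper: the paper's near-optimal interval need not itself have length at least $2\eta$ (only \emph{some} interval in $\moves(\loc,v)$ does), so Lemma~\ref{lemma-I} does not literally apply to it as written, whereas your per-move case split handles short intervals via the fallback $\nu_\loc(v')\geq 0$ --- precisely the point you flag as subtle, and which does require $v'$ to lie in the stated domain $\{v\mid\moves(\loc,v)\neq\varnothing\}$ (the paper's first case uses the same fact implicitly). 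You also dispense with the paper's reduction to a single outgoing transition followed by a pointwise maximum of $2$-Lipschitz functions: since Lemma~\ref{lemma-I} is stated for arbitrary elements of $\moves(\loc,v)$, your argument applies verbatim to the full move set, and nothing is lost --- the paper's per-transition decomposition seems chiefly to mirror the structure reused in the inductive proof of the proposition that follows. Your parenthetical remark that the index~$i$ plays no role is likewise accurate; it is never used in the paper's proof either.
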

   \begin{proof}
     We~first prove the result for the case where location~$\loc$ has a single
     transition~$(\loc,g,a,z,\loc')$.  Take two valuations~$v$
     and~$v'$ for which $\moves(\loc,v)$ is non-empty.  We~prove that
     $\nu_\loc(v')-\nu_\loc(v)\geq -2\norm{v'-v}$. By~symmetry of the
     roles of~$v$ and~$v'$, our result (for a single outgoing
     transition) follows.

     First, if $\moves(\loc,v)$ contains no
     intervals of size at least $2\norm{v'-v}$, then obviously
     $\nu_\loc(v') - \nu_\loc(v) \geq -2 \norm{v'-v}$.
     
     Now, assume that there exists $[\alpha,\beta]\in\moves(\loc,v)$
     such that $\beta-\alpha\geq 2\norm{v'-v}$.
     By~Lemma~\ref{lemma-I}, for any such interval, the interval
     $[\alpha+\norm{v'-v}, \beta-\norm{v'-v}]\in\moves(\loc,v')$.

     Fix~$\varepsilon>0$, and take $I=[\alpha,\beta]\in\moves(\loc,v)$
     such that $|I| \geq \nu_\loc(v)-\varepsilon$. Since
     $[\alpha+\norm{v'-v}, \beta-\norm{v'-v}]\in\moves(\loc,v')$,
     it~follows $\nu_\loc(v') \geq \nu_\loc(v) - 2\norm{v'-v}
     -\varepsilon$. Since this holds for any~$\varepsilon>0$, we~get
     the announced inequality.

     Now, in case there are several outgoing transitions, we~have
     \[
     \nu_\loc(v) = \sup_{(I,a)\in\moves(\loc,v)} |I| =
     \max_{a\in\Sigma} \sup_{(I,a)\in\moves(\loc,v)} |I|.
     \]
     Hence
     $\nu_\loc$ is the pointwise maximum of $2$-Lipschitz continuous
     functions, hence it is $2$-Lipschitz continuous.
   \end{proof}

\propcontinuite*

   \begin{proof}
     The proof is again by induction on~$i$. The case of~$i=0$ is
     trivial. Corollary~\ref{coro-I} proves the result for~$i=1$.

     Now, assume that the result holds for some index~$i-1$.  Take a
     location~$\loc$, and again first assume that $\loc$~has a single
     outgoing transition~$(\loc,g,a,z,\loc')$. As~in the previous
     proof, the result for the general case directly follows.
     
     Pick two valuations~$v$ and~$v'$ such that
     $\calP_i(\loc,v)$ and $\calP_i(\loc,v')$ are
     finite. In~particular, $\moves(\loc,v)$ and $\moves(\loc,v')$ are
     non-empty. We~follow the same approach as in the proof of
     Corollary~\ref{coro-I}, proving that $\tau_\loc(v')-\tau_\loc(v)\geq
     -2\norm{v'-v}$. By~symmetry, our result follows.

     Again, in case $\moves(\loc,v)$ contains no intervals of size
     larger than or equal to~$\norm{v'-v}$, the result is immediate.
     Otherwise, fix~$\varepsilon>0$, and take an
     interval~$I=[\alpha,\beta]$ such that
     \[
     \min(|I|, \inf_{d\in I} (\calP_{i-1}(\loc',(v+d)[z\to 0]))) \geq
     \tau_\loc(v) - \varepsilon.
     \]
     Then $|I|\geq \tau(v)-\varepsilon$ and for any~$d\in I$,
     $\calP_{i-1}(\loc',(v+d)[z\to 0]) \geq \tau_\loc(v)-\varepsilon$.

     Let~$I'=[\alpha+\norm{v'-v}, \beta-\norm{v'-v}]$. Then
     \[|I'|\geq
     |I|-2\norm{v'-v}\geq \tau_\loc(v)-\varepsilon-2\norm{v'-v}.
     \]
     Moreover, since $I'\subseteq I$, we~have
     $\calP_{i-1}(\loc',(v+d)[z\to 0]) \geq \tau_\loc(v)-\varepsilon$
     also when~${d\in I'}$. 
     Additionally, for any~$d\in I'$,
     \[
     \norm{(v'+d)[z\to 0] - (v+d)[z\to 0]} \leq \norm{v'-v},
     \]
     so that
     \begin{xalignat*}1
       \multicolumn{2}{l}{$\calP_{i-1}(\loc',(v+d)[z\to 0]) - \calP_{i-1}(\loc',(v'+d)[z\to 0])$} \\
       \hskip.45\linewidth & {}\leq 2\norm{(v'+d)[z\to 0] - (v+d)[z\to 0]} \\
     &{} \leq     2\norm{v'-v}.
     \end{xalignat*}
     Thus for any~$d\in I'$,
     \begin{xalignat*}1
     \calP_{i-1}(\loc',(v'+d)[z\to 0]) &\geq
     \calP_{i-1}(\loc',(v+d)[z\to 0]) -2\norm{v'-v} \\&
     \geq
     \tau_\loc(v)-\varepsilon-2\norm{v'-v}.
     \end{xalignat*}
     Since also $|I'|\geq
     \tau_\loc(v)-\varepsilon-2\norm{v'-v}$, we get
     \[
     \tau_\loc(v') \geq
     \min(|I'|, \inf_{d\in I'} (\calP_{i-1}(\loc',(v'+d)[z\to 0]))) \geq
     \tau_\loc(v)-\varepsilon-2\norm{v'-v}.
     \]
   \end{proof}

\section{Proofs of Section~\ref{sec-linear}}
\subsection{Proof of Proposition~\ref{prop-concave} and Corollary~\ref{coro-stratP2}}
\label{appendix-proof-lemma-concave}
\propconcave*
\begin{proof}
	The proof is by induction on~$i$: it is trivial for~$i=0$,
	since~$\calP_0(\loc,v)$ does not depend on~$v$.
	Assume that the result
	holds true for~$\calP_i$, and consider~$\calP_{i+1}$.
	Let~$\loc$ be a state of the automaton, and $(\loc,g,a,r,\loc')$ be its unique outgoing
	transition.
	Let~$(I_j,a)\in\moves(\loc,v_j)$ for $j\in \{1,2\}$. 
	By definition of $\moves$, for $j\in \{1,2\}$  we~then have $v_j+d_j \models g$ for any
	$d_j\in I_j$. 
	We~can then define the set
	$I_{\lambda}=\{\lambda d_1+(1-\lambda)d_2 \mid d_1\in
	I_1,\ d_2\in I_2\}$.
	Moreover, pick any~$d_\lambda \in I_\lambda$: then
	$d_\lambda=\lambda\cdot d_1+(1-\lambda)\cdot d_2$ for
	some~$d_1\in I_1$ and~$d_2\in I_2$.
	Then $v_\lambda+d_\lambda$ can be written as $\lambda\cdot (v_1+d_1) + (1-\lambda)\cdot(v_2+d_2)$.
	Since both $v_1+d_1$ and $v_2+d_2$ satisfy guard~$g$, by convexity of~$g$, we~have that
	$v_\lambda+d_\lambda\models g$.
	This proves that $(I_\lambda,a)\in\moves(\loc,v_\lambda)$.
	Moreover $|I_{\lambda}|= \lambda\cdot|I_1|+ (1-\lambda)\cdot|I_2|$.
	
	Fix~$\varepsilon>0$, and take two intervals~$I_1$ and~$I_2$ such that
	$\min(|I_j|,\inf\{\calP_i(\loc',(v_j+d_j[r\to 0]) \mid d_j\in I_j\}) \geq \calP_i(\loc,v_j)-\epsilon$ for $j\in \{1,2\}$.
	Define~$I_\lambda$ as above.
	Then:
	\begin{xalignat*}1
		\calP_{i+1}(\loc,v_\lambda) &= \sup_{(I,a)\in \moves(\loc,v_\lambda)} \min(|I|, \inf\{\calP_i(\loc',v') \mid \exists d\in I.\ (\loc,v) \trans[d,a] (\loc',v')\})\\
		&\geq
		\min(|I_\lambda|, \inf\{\calP_i(\loc',v_\lambda') \mid \exists d_\lambda\in I_\lambda.\ (\loc,v_\lambda) \trans[d_\lambda,a] (\loc',v_\lambda')\}) \\
		\noalign{\hfill(because the supremum over all moves is larger than or\nopagebreak}
		\noalign{\hfill equal to the value for the particular move~$(I_\lambda,a)$)\pagebreak[1]}
		\noalign{\pagebreak[1]}
		&= \min(|I_\lambda|, \inf\{\calP_i(\loc',(v_\lambda+d_\lambda)[r\to 0])\mid d_\lambda\in I_\lambda\}) \\
		\noalign{\hfill(by expanding the effect of transition $(\loc,g,a,r,\loc')$}
		&= \min(|I_\lambda|, \inf\{\calP_i(\loc',(\lambda\cdot(v_1+d_1)+(1-\lambda)\cdot (v_2+d_2))[r\to 0])\\
		\noalign{\hfill$\mid d_1\in I_1, d_2\in I_2\})$}
		\noalign{\hfill(by defintion of~$I_\lambda$)}
		\noalign{\pagebreak[1]}
		&= \min(|I_\lambda|, \inf\{\calP_i(\loc',(\lambda\cdot((v_1+d_1)[r\to 0])+\\
		\noalign{\hfill$(1-\lambda)\cdot ((v_2+d_2))[r\to 0]))
			\mid d_1\in I_1, d_2\in I_2\})$}
		\noalign{\hfill(by linearity of projection)}
		&\geq \min(|I_\lambda|, \inf\{\lambda\cdot(\calP_i(\loc',(v_1+d_1)[r\to 0])\\
		\noalign{\hfill$ + (1-\lambda)\cdot\calP_i(\loc',(v_2+d_2)[r\to 0]))
			\mid d_1\in I_1, d_2\in I_2\})$} 
		\noalign{\hfill (by induction hypothesis)}
		&= \min(\lambda\cdot|I_1|+ (1-\lambda)|I_2|, \lambda\cdot\inf\{\calP_i(\loc',(v_1+d_1)[r\to 0]) \mid d_1\in I_1\} \\
		\noalign{\hfill$+(1-\lambda)\cdot\inf\{\calP_i(\loc',(v_2+d_2)[r\to 0]) \mid d_2\in I_2\})$}
		%
		\noalign{\pagebreak[1]}
		&\geq \lambda\cdot \min(|I_1|,\inf\{\calP_i(\loc',(v_1+d_1[r\to 0]) \mid d_1\in I_1\}) \\
		\noalign{\hfill$ +(1-\lambda)\cdot\min(|I_2|,\inf\{\calP_i(\loc',(v_2+d_2)[r\to 0]) \mid d_2\in I_2\}))$}
		\noalign{\hfill(as $\min(a+b,a'+b') \geq \min(a,a') + \min(b,b')$)}
		&\geq \lambda\cdot\calP_{i+1}(\loc,v_1) + (1-\lambda)\cdot\calP_{i+1}(\loc,v_2) -\varepsilon.
	\end{xalignat*}
	Since this holds for any~$\varepsilon>0$, our result follows.
	%
\end{proof}

\corostratPII*

\begin{proof}
  The fact that $\calP_i(\loc,v_{\lambda}) \geq
  \min\{\calP_i(\loc,v),\calP_i(\loc,v')\}$ is a direct consequence of
  Proposition~\ref{prop-concave}.
  
  Additionally, we~have
  \begin{multline*}
\inf \{\calP_i(\loc',v')\mid
\exists d\in [\alpha,\beta].\ (\loc,v)\trans[d,a](\loc',v')\}
= \\
\inf \{\calP_i(\loc',v')\mid
  \exists \lambda\in[0,1].\ v'=\lambda \cdot v'_\alpha+(1-\lambda) \cdot v'_\beta\}
  \end{multline*}
  because $(v+(\lambda\alpha+(1-\lambda)\beta))[r\to 0] =
  \lambda\bigl( (v+\alpha)[r\to 0]\bigr) +
  (1-\lambda)\bigl( (v+\beta)[r\to 0]\bigr)$.
  The second claim follows. 
\end{proof}

\subsection{Proof of lemma~\ref{lemma-linperm}}
\label{appendix-proof-lemma-linperm}

For this proof, we use a more precise definition of piecewise-affine functions:

\begin{definition}\label{def-paf}
  Let~$n\in\bbN$.
  An $n$-dimensional piecewise-affine function is a mapping
  $f\colon \bbR+^n \to \bbRbar$ for which
  there exist
  \begin{itemize}
  \item a finite family of $n$-dimensional linear
    functions~$\Phi=(\phi_k)_{1\leq k\leq m}$, and a finite family of finite
    partitions~$P=(P_k)_{1\leq k\leq m}$ of~$\bbR$; these define the
    following partition of~$\bbR+^n$ into convex
    polyhedra (some of which may be empty):
    \[
    \polyh{\Phi,P} = \{\polyh{\Phi,b} \mid b=(b_k)_{1\leq k\leq m}
    \text{ s.t. for all } 1\leq k\leq m,\ b_k\in P_k\}.
    \]
  \item for each convex polyhedron~$h$ of $\polyh{\Phi,P}$,
    an affine function~$f^h$, which we write as
    $f^h(v) = F_0^h + \sum_{1\leq k\leq n} F_k^h\cdot v_k$;
  \end{itemize}
  s.t. for all~$v\in\bbR+^n$, $f(v)=f^h(v)$ for the unique
  cell~$h$ of~$\polyh{\Phi,P}$ containing~$v$.



\end{definition}

\begin{example}
We consider the 2-dimensional affine function~$f$ displayed on
Fig.~\ref{fig-2daff}. Its underlying partition can be defined using
two linear functions:
\begin{itemize}
\item $\phi_1\colon (x,y) \mapsto y$, associated with the partition
  $P_1=\{(-\infty; 1], (1; +\infty)\}$;
\item $\phi_2\colon (x,y) \mapsto x-1$ associated with
  $P_2=\{(-\infty; 0], (0;1], (1; +\infty)\}$.
\end{itemize}

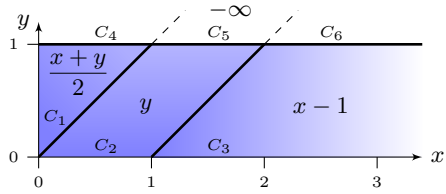
\begin{figure}[h]
  \centering
  \begin{tikzpicture}[scale=1.5]
    \shade[shading=axis,bottom color=blue!50!white,top color=blue!32!white,shading angle=-45] (0,0) |- (1,1) -- cycle;
    \shade[shading=axis,bottom color=blue!50!white,top color=blue!0!white,shading angle=-90] (1,0) -- (1,1) -- (3.4,1) -- (3.4,0) -- cycle;
    \shade[shading=axis,bottom color=blue!50!white,top color=blue!29!white,shading angle=0] (0,0) -- (1,1) -- (2,1) -- (1,0) -- cycle;
    \path[use as bounding box] (0,1.2) -- (0,-.1);
    \draw[latex'-latex'] (0,1.2) node[left] {$y$} |- (3.4,0) node[right] {$x$};
    \draw (0,1) -- +(-.1,0) node[left] {$\scriptstyle 1$};
    \draw (0,0) -- +(-.1,0) node[left] {$\scriptstyle 0$};
    \draw (0,0) -- +(0,-.1) node[below] {$\scriptstyle 0$};
    \draw (1,0) -- +(0,-.1) node[below] {$\scriptstyle 1$};
    \draw (2,0) -- +(0,-.1) node[below] {$\scriptstyle 2$};
    \draw (3,0) -- +(0,-.1) node[below] {$\scriptstyle 3$};
    \draw[line width=1pt] (0,1) -- (3.4,1);
    \draw[line width=1pt] (0,0) -- (1,1);
    \draw[dashed] (1,1) -- (1.3,1.3);
    \draw[line width=1pt] (1,0) -- (2,1);
    \draw[dashed] (2,1) -- (2.3,1.3);
    \path(2.5,.45) node {$x-1$};
    \path(.95,.45) node {$y$};
    \path(.35,.75) node {$\dfrac{x+y}2$};
    \path(1.7,1.3) node {$-\infty$};
    \path(.15,.35) node {$\scriptstyle C_1$};
    \path(.6,.1) node {$\scriptstyle C_2$};
    \path(1.6,.1) node {$\scriptstyle C_3$};
    \path(.6,1.1) node {$\scriptstyle C_4$};
    \path(1.6,1.1) node {$\scriptstyle C_5$};
    \path(2.6,1.1) node {$\scriptstyle C_6$};
  \end{tikzpicture}
  \caption{An example of a 2-dimensional piecewise-affine function}
  \label{fig-2daff}
\end{figure}

This defines a partition of $\bbR+^2$ into six cells: on three of
them~(namely~$C_4$, $C_5$ and~$C_6$), for which $\phi_1(x,y)\in (1;+\infty)$,
our piecewise-affine function~$f$ constantly equals~$-\infty$; for the
other three cells:
\begin{itemize}
\item in $C_1=\{(x,y) \mid \phi_1(x,y)\in(-\infty;1] \text{ and } \phi_2(x,y)\in (-\infty;0]\}$, the affine function~$f^{C_1}$ coincides with $(x,y)\mapsto \frac{x+y}2$;
\item in $C_2=\{(x,y) \mid \phi_1(x,y)\in(-\infty;1] \text{ and } \phi_2(x,y)\in (0;1]\}$, the affine function~$f^{C_2}$ coincides with $(x,y)\mapsto y$;
\item in $C_3=\{(x,y) \mid \phi_1(x,y)\in(-\infty;1] \text{ and } \phi_2(x,y)\in (1;+\infty)\}$, the affine function~$f^{C_3}$ coincides with $(x,y)\mapsto x-1$.
\end{itemize}
\end{example}

\bigskip
\noindent
We~now prove Lemma~\ref{lemma-linperm}:

\lemmalinperm*
\begin{proof}
%
%
%
%
  We~assume that $v\mapsto \calP_{i-1}(\loc',v)$ is not
constantly~$-\infty$ (if~it were the case, then also
$\calP_i(\loc,v)=-\infty$ for all~$v$). Similarly, we~assume that
$\moves(\loc,v)$ is non-empty for some~$v$.
Since $v\mapsto \calP_{i-1}(\loc',v)$ is piecewise-affine,
we~can then fix a
polyhedral partition~$\polyh{\Phi,P}$ and, for each cell~$h$ in this
partition, an affine functions $f_h$, such that
$\calP_{i-1}(\loc',v)=f_h(v)$ for the only cell~$h$ containing~$v$.

Our procedure for computing~$\calP_i$ in~$\loc$ consists in listing
the possible pairs of cells defining~$\calP_{i-1}$ in~$\loc'$ where
the left- and right-bounds of the interval to be proposed lie. Our
approach thus consists in listing each such pair of (possibly identical)
cells~$(h_\alpha,h_\beta)$ in the partition defining~$\calP_{i-1}$ in~$\loc'$, and
\begin{itemize}
\item characterizing the set~$S_{(h_\alpha,h_\beta)}$ of all valuations from which those cells can be
  reached by taking the transition from~$\loc$ to~$\loc'$. We~compute
  this polyhedron using quantifier elimination;
\item computing the ranges for~$\alpha$ and~$\beta$ that
  can be played in order to indeed end up respectively in~$h_\alpha$
  and~$h_\beta$. These are intervals~$I_\alpha$ and $I_\beta$, whose
  bounds are expressed as functions of~$v$.
  Computing these bounds may require refining the polyhedron obtained at
  the previous step into several subpolyhedra, in order to
  express them as affine functions of~$v\in
  S_{(h_\alpha,h_\beta)}$;
\item for each subpolyhedra, compute the optimal values for~$\alpha$
  and~$\beta$: following Corollary~\ref{coro-stratP2}, this amounts to
  find values for~$\alpha\in I_\alpha$ and~$\beta\in I_\beta$ that
  maximize the following function
  \[
  \mu\colon (\alpha,\beta) \mapsto \min\{\beta-\alpha;
  \calP_{i-1}(\loc',(v+\alpha)[z\to 0]);
  \calP_{i-1}(\loc',(v+\beta)[z\to 0])  \} . 
  \]
  This is performed by applying our technical Lemma~\ref{lemma-tech};
  it may again require another refinement of the subpolyhedra, and
  returns an affine function for each subpolyhedron.
\end{itemize}
For each pair~$(h_\alpha,h_\beta)$, we~end up with a (partial)
piecewise-affine function, defined on $S_{(h_\alpha,h_\beta)}$,
returning the optimal permissiveness that can be obtained if playing
interval~$[\alpha,\beta]$ such that taking the transition to~$\loc'$
after delay~$\alpha$ (resp.~$\beta$) leads to~$h_\alpha$
(resp.~$h_\beta$).
Our final step to compute~$\calP_i$ in~$\loc$ consists in
taking the maximun of all these partial functions on their (possibly
overlapping) domains.

Notice that all these computations are performed symbolically
w.r.t~$v$: we~manipulate affine functions of~$v$, with conditions
on~$v$ for our computation to be valid. Figure~\ref{fig-procedure-app}
illustrates the main three steps of this procedure.

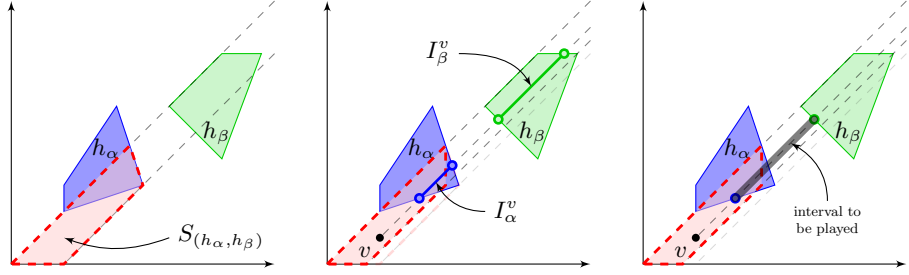
\begin{figure}[ht]
  \centering

  \begin{tikzpicture}[scale=.7]
    \begin{scope}
      \draw[latex'-latex'] (5,0) -| (0,5);
      \draw[vert] (3,3) -- (4,2) -- (4.75,4) -- (4,4) -- cycle;
      \path (3.9,2.5) node {$h_\beta$};
      \draw[bleu] (1,1) -- (1,1.5) -- (2,3) -- (2.5,1.5) -- cycle;
      \path (1.8,2.2) node {$h_\alpha$};
      \draw[dashed,opacity=.5] (5,4) -- (1,0);
      \draw[dashed,opacity=.5] (5,5) -- (0,0);
      \draw[rouge,dashed,line width=1pt,opacity=.5] (0,0) -- (2.25,2.25) --
      (2.5,1.5) -- (1,0) -- cycle;
      \draw[rouge,dashed,line width=1pt,fill=none] (0,0) -- (2.25,2.25) --
      (2.5,1.5) -- (1,0) -- cycle;
      \draw (1,.5) edge[out=-20,in=-160,latex'-]  node[pos=1,right]
            {$S_{(h_\alpha,h_\beta)}$} (3,.5);
    \end{scope}
    \begin{scope}[xshift=6cm]
      \draw[latex'-latex'] (5,0) -| (0,5);
      \draw[vert] (3,3) -- (4,2) -- (4.75,4) -- (4,4) -- cycle;
      \path (3.9,2.5) node {$h_\beta$};
      \draw[bleu] (1,1) -- (1,1.5) -- (2,3) -- (2.5,1.5) -- cycle;
      \path (1.8,2.2) node {$h_\alpha$};
      \draw[dashed,opacity=.5] (5,5) -- (0,0);
      \draw[rouge,dashed,line width=1pt,opacity=.15,fill=none] (0,0) -- (2.25,2.25) --
      (2.5,1.5) -- (1,0) -- cycle;
      \draw[dashed,opacity=.5] (5,4.25) -- (.75,0);
      \draw[dashed,opacity=.15] (5,4) -- (1,0);
      \draw[rouge,dashed,line width=1pt,opacity=.5] (0,0) -- (2.25,2.25) --
      (2.25,1.5) -- (.75,0) -- cycle;
      \draw[rouge,dashed,line width=1pt,fill=none] (0,0) -- (2.25,2.25) --
      (2.25,1.5) -- (.75,0) -- cycle;
      \draw (1,.5) node[inner sep=0pt,fill=black,minimum size=3pt,circle]
      (v) {} node[below left] {$v$};
      \draw[dashed,opacity=.5] (v) -- +(4,4);
      \draw[line width=1pt,bleu] (1.75,1.25) node[inner sep=0pt,bleu,minimum size=3pt,circle] {} -- (2.375,1.875) node[inner sep=0pt,bleu,minimum size=3pt,circle] {} node[midway,coordinate] (b) {};
      \draw[line width=1pt,vert] (3.25,2.75) node[inner sep=0pt,vert,minimum size=3pt,circle] {} -- (4.5,4) node[inner sep=0pt,vert,minimum size=3pt,circle] {}
      node[midway,coordinate] (g) {};
      \draw (b) edge[latex'-,out=-60,in=180]
        node[right,pos=1] {$I^v_{\alpha}$} (3,1);
      \draw (g) edge[latex'-,out=120,in=0]
        node[left,pos=1] {$I^v_{\beta}$} (2.5,4);
    \end{scope}
    \begin{scope}[xshift=12cm]
      \draw[latex'-latex'] (5,0) -| (0,5);
      \draw[vert] (3,3) -- (4,2) -- (4.75,4) -- (4,4) -- cycle;
      \path (3.9,2.5) node {$h_\beta$};
      \draw[bleu] (1,1) -- (1,1.5) -- (2,3) -- (2.5,1.5) -- cycle;
      \path (1.8,2.2) node {$h_\alpha$};
      \draw[dashed,opacity=.5] (5,4.25) -- (.75,0);
      \draw[dashed,opacity=.15] (5,4) -- (1,0);
      \draw[dashed,opacity=.5] (5,5) -- (0,0);
      \draw[rouge,dashed,line width=1pt,opacity=.5] (0,0) -- (2.25,2.25) --
      (2.25,1.5) -- (.75,0) -- cycle;
      \draw[rouge,dashed,line width=1pt,fill=none] (0,0) -- (2.25,2.25) --
      (2.25,1.5) -- (.75,0) -- cycle;

      \draw (1,.5) node[inner sep=0pt,fill=black,minimum size=3pt,circle]
      (v) {} node[below left] {$v$};
      \draw[dashed,opacity=.5] (v) -- +(4,4);
      \draw[line width=1pt,bleu]
      (1.75,1.25) node[inner sep=0pt,bleu,minimum size=3pt,circle] {};
      \draw[line width=1pt,vert] (3.25,2.75) node[inner sep=0pt,vert,minimum size=3pt,circle] {};
      \draw[opacity=.5,black,line cap=round,line width=3pt] (1.75,1.25) -- (3.25,2.75) node[pos=.7,coordinate] (m) {};
      \draw (m) edge[latex'-,out=-30,in=90]  node[pos=1,below,text width=2cm,scale=.6,align=center]
        {interval to be played} (3.5,1.2);
    \end{scope} 
  \end{tikzpicture}
        
  \caption{Three steps of our procedure:
    $S_{(h_\alpha,h_\beta)}$; then compute expressions for
    $I^v_\alpha$ and $I^v_\beta$ (notice that we had to refine
    $S_{(h_\alpha,h_\beta)}$, because the expression for $I^v_\beta$
    would be different for the lower part of $S_{(h_\alpha,h_\beta)}$ since
    it ends of a different facet of~$h_\beta$); 
    finally select best values for~$\alpha$ and~$\beta$.}
  \label{fig-procedure-app}
\end{figure}

We now detail the first three steps. For this, we fix to
cells~$h_\alpha$ and~$h_\beta$ of the partition defining~$\calP_{i-1}$
in~$\loc'$.
Following Definition~\ref{def-paf},
those cells can be characterized by two families, $(b^\alpha_j)_{j}$ and
$(b^\beta_j)_j$, of cells in~$P$, such that $h_\alpha=\bigcap_{1\leq j\leq m}
\phi_j^{-1}(b^\alpha_j)$ and $h_\beta=\bigcap_{1\leq j\leq m}
\phi_j^{-1}(b^\beta_j)$. 

\subsubsection{Computing~$S_{(h_\alpha,h_\beta)}$.}
We~assume that the conjunction of the guard~$g$ and the invariant~$I(\loc)$
can be represented as the conjunction of
one interval constraint~$[L^c,U^c]$ per clock~$c$.
%
The~set~$S_{(h_\alpha,h_\beta)}$ of valuations that can
reach~$h_\alpha$ and~$h_\beta$ with delays~$\alpha\leq \beta$ (and
after taking the transition to~$\loc'$) is defined as follows:
\begin{multline*}
S_{h_\alpha,h_\beta} = \{v\in\bbR+^n\mid \exists 0\leq\alpha\leq\beta.\
  \forall j.\  \phi_j(v+\alpha[z\to 0])\in b^\alpha_j \text{ and } \\
  \phi_j(v+\beta[z\to 0])\in b^\beta_j 
  \text{ and } v+\alpha\in g \text{ and } v+\beta\in g
  \}.
\end{multline*}

Writing $K_j$ for the sum of all coefficients in the linear
function~$\phi_j$, and $b_j^\alpha=[l_j^\alpha,u_j^\alpha]$, condition
$\phi_j(v+\alpha[z\to 0])\in b_j^\alpha$ can be rewritten either as
$\phi_j(v)\in b_j^\alpha$ if~$K_j=0$, or as
\[
\frac1{K_j} (l_j^\alpha-\phi_j(v)) \leq\alpha\leq \frac1{K_j}
(u_j^\alpha-\phi_j(v))
\]
otherwise. The~same applies for~$\beta$.  Using Fourier-Motzkin
quantifier elimination, $S_{h_\alpha,h_\beta}$ can be written as the
conjunction of the following four sets of constraints:%
\begin{itemize}
\item existence of~$\alpha$ is expressed as the conjunction of
  at most $(m+n)\cdot(m+n-1)$ conditions:
  \begin{itemize}
  \item $m\cdot(m-1)$ conditions are of the form $\frac1{K_j}
    (l_j^\alpha-\phi_j(v)) \leq \frac1{K_{j'}}
    (u_{j'}^\alpha-\phi_{j'}(v))$. Notice that in those conditions,
    the coefficients of the linear functions sum up to~zero (we~name
    such linear functions \emph{diagonal} in the sequel);
  \item $mn$ conditions of the form $\frac1{K_j} (l_j^\alpha-\phi_j(v))
    \leq U^c-v(c)$ and $mn$ conditions of the form $L^c-v(c) \leq
    \frac1{K_j} (u_j^\alpha-\phi_j(v))$. Again,
    this gives diagonal linear functions;
  \item $n\cdot (n-1)$ conditions of the form $L^c-v(c)\leq
    U^{c'}-v(c')$; These also give rise to diagonal affine functions;
  \end{itemize}
\item existence of~$\beta$ is expressed similarly. In~particular, it~uses
  the same linear functions, which are all diagonal;
\item that~$\alpha$ is nonnegative is expressed as the conjunction of $\frac1{K_j}
  (u_j^\alpha-\phi_j(v))\geq 0$ for all~$j$ and $U_c-v(c)\geq 0$ for all~$c$;
\item that $\alpha\leq\beta$ is expressed as $(m+n)\cdot (m+n-1)$
  constraints similar to those in the first case. Again, no new linear
  functions are created in this step, compared to the previous ones,
  and only diagonal functions will be used.
\end{itemize}
In~the end we~have $3(m+n)(m+n-1)+(m+n)$ constraints (but defined with
at most $(m+n)^2$ linear functions). Notice that at most $m+n$ of those
linear functions may be non-diagonal, and those non-diagonal functions directly
originate either from the guards or from the partition defining~$\calP_{i-1}$ in~$\loc'$.

\subsubsection{Computing~the range for the bounds~$\alpha$ and~$\beta$.}
In~case $S_{(h_\alpha,h_\beta)}$ is non-empty, we~proceed with computing the 
values for~$\alpha$ and~$\beta$ that indeed lead to~$h_\alpha$ and~$h_\beta$.
For any~$v\in S_{(h_\alpha,h_\beta)}$, the~set~$I^v_\alpha$ (resp.~$I^v_\beta$) of values
for~$\alpha$ (resp.~$\beta$) for which $v+\alpha\models g$ and
$(v+\alpha)[z\to 0]\in h_\alpha$ (resp. $v+\beta\models g$ and
$(v+\beta)[z\to 0]\in h_\beta$) then is an interval: from the conditions above,
these~intervals can be written\footnote{Notice that the bounds of those intervals may be left- and\slash or right open; we only consider closed intervals to not blur the focus of our presentation, but we could handle open intervals easily.} 
$I^v_\alpha= [ D^v_\alpha, E^v_\alpha ]$ with
\begin{xalignat*}1
D^v_\alpha &= \max\left(\Bigl\{\frac1{K_j} (l^\alpha_j-\phi_j(v)) \mid 1\leq j\leq m\Bigr\} \cup
  \{L_c-v(C) \mid c\in\clocks\}\right)
\\
E^v_\alpha &= \min\left(\Bigl\{\frac1{K_j} (u^\alpha_j-\phi_j(v)) \mid 1\leq j\leq m\Bigr\} \cup
  \{U_c-v(C) \mid c\in\clocks\}\right)
\end{xalignat*}
(and similarly for~$I_\beta$).  In~order to have affine expressions
for the bounds of $I^v_\alpha$ and $I^v_\beta$, we~refine
$S_{(h_\alpha,h_\beta)}$ into cells on which one of the affine
functions in the expressions of~$D^v_\alpha$ (resp.~$E^v_\alpha$)
realizes the maximum (resp. minimum). This refinement is obtained by
expressing the fact that the selected affine function is indeed larger
than (resp. smaller than) all other functions. This may
refine~$S_{(h_\alpha,h_\beta)}$ into at most $(m+n)^4$ cells, defined
with diagonal linear functions already introduced at the previous step.

\subsubsection{Computing the optimal values for~$\alpha$ and~$\beta$.}
We~let $D=\{(\alpha,\beta) \mid \alpha\in I^v_\alpha,\ \beta\in
I^v_\beta,\ \alpha\leq\beta\}$. It~remains to find the optimal choices
for~$\alpha$ and~$\beta$, i.e., the values that maximize
\[
\min\{\beta-\alpha;
  \inf_{\gamma\in[\alpha;\beta]}\{\calP_{i-1}(\loc',(v+\gamma)[r\to 0])\}\}
\]
over~$D$.
Thanks to Corollary~\ref{coro-stratP2}, this  amounts to maximizing
\[
\mu\colon (\alpha,\beta) \mapsto \min\{\beta-\alpha;
  \calP_{i-1}(\loc',(v+\alpha)[z\to 0]);
  \calP_{i-1}(\loc',(v+\beta)[z\to 0])  \}
\]
over that~set.
Since $(v+\alpha)[z\to 0]\in h_\alpha$, we~have
$\calP_{i-1}(\loc',(v+\alpha)[z\to 0]) = f_{h_\alpha}((v+\alpha)[z\to
  0])$.  Function~$f_{h_\alpha}$ is an n-dimensional affine function;
writing~$F_{h_\alpha}^z$ for the sum of the coefficients of the clocks
that are not reset in~$z$ (i.e.,
$F^z_{h_\alpha}=f_{h_\alpha}(\bfone)-f_{h_\alpha}(\bbone_z)$),
we~have $f_{h_\alpha}((v+\alpha)[z\to 0] = F^z_{h_\alpha}\cdot\alpha +
f_{h_\alpha}(v[z\to 0])$. Similarly for~$\beta$. We~then~have
\[
\mu(\alpha,\beta) = \min\{\beta-\alpha;
F^z_{h_\alpha}\cdot\alpha + f_{h_\alpha}(v[z\to 0]);
F^z_{h_\beta}\cdot\beta + f_{h_\beta}(v[z\to 0])
\},
\]
which we want to minimize over~$D$. In~case~$I^v_\beta$ is unbounded
(which may occur when all upper bounds~$u^\beta_j$ and~$U_c$ equal
$+\infty$), by~Lemma~\ref{lemma-infini} we~get that $f_{h_\beta}$ is
constant, and we can choose $\beta=+\infty$. It~remains to maximize
$\min\{F^z_{h_\alpha}\cdot\alpha + f_{h_\alpha}(v[z\to 0]),
f_{h_\beta}(v[z\to 0])\}$ when~$\alpha$ ranges
over~$I^v_\alpha$. Again, if~$I^v_\alpha$ is unbounded, $f_{h_\alpha}$
is constant, and~$\alpha$ can be chosen arbitrarily in~$I^v_\alpha$;
otherwise, the maximum is obtained at one of the bounds
of~$I^v_\alpha$, depending on the sign of~$F^z_{h_\alpha}$.

Now, in case~$I^v_\alpha$ and~$I^v_\beta$ are bounded, we~apply
Lemma~\ref{lemma-technique} and~directly get the optimal
solution. This may again require refining the polyhedron being
considered into at most~$13$ subpolyhedra, since there may be up to
$13$ different cases for minimizing~$\mu(\alpha,\beta)$
(see~Appendix~\ref{appendix-proof-tech}).  We~the get the optimal values for~$\alpha$
and~$\beta$, as well as the value of~$\mu$ at that maximal point,
depending on the signs of $F^z_{h_\alpha}$ and $F^z_{h_\beta}$. It~can
be checked that both the optimal choices for~$\alpha$ and~$\beta$, as
well as the resulting permissiveness function, are linear functions
of~$v$: indeed, in~our instance of the problem of
Lemma~\ref{lemma-technique}, $a$ and~$c$ are constant, while $b$
and~$d$, and $m_x$, $M_x$, $m_y$ and~$M_y$ are affine functions
of~$v$; the~latter may only be multiplied by constants, and\slash or
added with one another.

The~coefficients of those affine functions can be computed from those
of~$f_{h_\alpha}$ and $f_{h_\beta}$, and from those of
functions~$(\phi_j)_j$ defining the partiiton of the piecewise-linear
permissiveness function~$\calP_i(\loc',v)$: in~the worst case,
the~numerators are multipled by the sum of all coefficients, and the
denominators may be multiplied by the product of two sums of
coefficients. In~any~case, the~space needed to store one such function
(assuming binary encoding) is at most linear in the space needed to
store~$\calP_{i-1}$.  The concludes the third step of our computation.

\subsubsection{Finalizing the computation of~$\calP_i$ in~$\loc$.}
We~now have a collection of affine functions (at~most $13m^2\cdot
(m+n)^4$), associated with a polyhedron on which they give a candidate
expression for~$\calP_i$. The~polyhedra may overlap, as for each
valuation we considered $m^2$ possible cells in which the valuation
may end up. We~thus have to refine one last time the partition we
obtained, by considering subpolyhedra where one of the $m^2$ candidate
functions is larger than the other ones. This may further refine each
cell into~$m^2$ subpolyhedra, defined with up to $m^2$ new linear functions.

In the end, this proves that the function $\calP_i$ in~$\loc$ is
piecewise affine, and that it can be computed from~$\calP_{i-1}$ in
time~$O(m^4\cdot(m+n)^4)$.  The~partition defining~$\calP_i$ in~$\loc$ may have up
to $O(m^4\cdot (m+n)^4)$ cells, defined with at most $O((m+n)^2)$ linear functions.
The~coefficients of the affine functions defining~$\calP_i$ are
polynomials in the coefficients of the affine functions
defining~$\calP_{i-1}$.
\end{proof}

\section{Example of computation of permissiveness}
\label{app-example-long}
\begin{example}
  We slightly modify the automaton of Fig.~\ref{fig-experm}, by changing
the guard on the first transition as displayed on
Fig.~\ref{fig-experm2-app}.  We develop the computation of the
permissiveness function for this automaton.
\begin{figure}[!ht]
  \centering
  \begin{tikzpicture}
    \begin{scope}
      \draw (0,0) node[rond,bleu] (a) {$\loc_0$};
      \draw (4,0) node[rond,rouge] (b) {$\loc_1$};
      \draw (8,0) node[rond,vert] (c) {$\loc_f$};
      \draw (a) edge[draw,-latex'] node[above] {$\begin{array}{>{\scriptstyle}c}
          0\leq y\leq 1
        \end{array}$} node[below] {$\scriptstyle y:=0$} (b);
      \draw (b) edge[draw,-latex'] node[above] {$\begin{array}{>{\scriptstyle}c}
          1\leq x\leq 2 \\ 0\leq y\leq 1
        \end{array}$} (c);
    \end{scope}
  \end{tikzpicture}
  \caption{Automaton of Fig.~\ref{fig-experm} where the guard on the
    first transition has been slightly extended}
  \label{fig-experm2-app}
\end{figure}
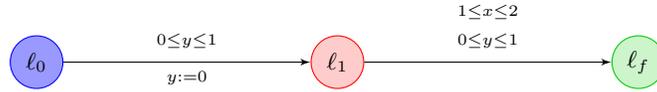

Obviously, function $\calP(\loc_1)=\calP_1(\loc_1)$ is
unchanged. We~detail the computation of~$\calP_2(\loc_0)$. Following
the proof of Lemma~\ref{lemma-linperm}, we~list the pairs of possible
cells where the automaton may enter~$\loc_1$ after delays~$\alpha$
and~$\beta$: since the transition to~$\loc_1$ resets~$y$, we have two
possible cells, namely\footnote{For convenience in this 2-clock
  example, we may write valuations either as~$v$ or as pairs~$(x,y)$,
  depending on the situation.} $C_0=\{(x,0) \mid 0\leq x\leq 1\}$ and
$C_1=\{(x,0) \mid 1<x\leq 2\}$. Hence we have four possible situations to
consider:
\begin{enumerate}
\item both $(v+\alpha)[y\to 0]$ and $(v+\beta)[y\to 0]$ in~$C_0$;
\item both $(v+\alpha)[y\to 0]$ and $(v+\beta)[y\to 0]$ in~$C_1$;
\item $(v+\alpha)[y\to 0]\in C_0$ and $(v+\beta)[y\to 0]\in C_1$;
\item $(v+\alpha)[y\to 0]\in C_1$ and $(v+\beta)[y\to 0]\in C_0$.
\end{enumerate}

For each pair, we~begin with computing the set of valuations~$v$ for
which there are values~$0\leq \alpha\leq \beta$ satisfying the conditions:
\begin{enumerate}
\item having $(v+\alpha)[y\to 0]$ and $(v+\beta)[y\to 0]$ in~$C_0$ can
  be written as
  \begin{multline*}
  \exists \alpha\leq\beta.\
  0\leq v(y)+\alpha\leq 1 \et 0\leq v(y)+\beta\leq 1 \et {}\\
  0\leq v(x)+\alpha\leq 1 \et 0\leq v(x)+\beta\leq 1.
  \end{multline*}
  The~constraints on~$y$ come from the guard of the transition, while
  those on~$x$ correspond to having the target valuations in~$C_0$.
  In~this simple case, quantifier elimination returns $\{(x,y)
  \mid 0\leq x\leq 1 \et 0\leq y\leq 1\}$.
\item having $(v+\alpha)[y\to 0]$ and $(v+\beta)[y\to 0]$ in~$C_1$
  translates to
  \begin{multline*}
  \exists \alpha\leq\beta.\
  0\leq v(y)+\alpha\leq 1 \et 0\leq v(y)+\beta\leq 1 \et \\
  1< v(x)+\alpha\leq 2 \et 1< v(x)+\beta\leq 2.
  \end{multline*}
  This results in $\{(x,y) \mid 0\leq x\leq 2 \et 0\leq y\leq 1 \et
  y\leq x\}$.
\item the case where $(v+\alpha)[y\to 0]\in C_0$ and $(v+\beta)[y\to 0]\in C_1$
  writes
  \begin{multline*}
  \exists \alpha\leq\beta.\
  0\leq v(y)+\alpha\leq 1 \et 0\leq v(y)+\beta\leq 1 \et \\
  0\leq v(x)+\alpha\leq 1 \et 1< v(x)+\beta\leq 2.
  \end{multline*}
  This corresponds to $\{(x,y) \mid 0\leq x\leq 1 \et 0\leq
  y\leq 1 \et y\leq x\}$.
\item Finally, the situation where
  $(v+\alpha)[y\to 0]\in C_1$ and $(v+\beta)[y\to 0]\in C_0$ translates as
  \begin{multline*}
  \exists \alpha\leq\beta.\
  0\leq v(y)+\alpha\leq 1 \et 0\leq v(y)+\beta\leq 1 \et \\
  1< v(x)+\alpha\leq 2 \et 0\leq v(x)+\beta\leq 1.  
  \end{multline*}
  This in particular requires $1-v(x)<\alpha$ and $\beta\leq 1-v(x)$,
  which are incompatible with the condition $\alpha\leq\beta$. Hence
  this case is empty.
\end{enumerate}

We~now compute the intervals of possible values for~$\alpha$
and~$\beta$: this just amounts to writing the conditions for having
$v+\alpha$ satisfy the guard and $(v+\alpha)[y\to 0]$ belong to the
target cell (the~computation is identical for~$\beta$):
\begin{itemize}
\item having $(v+\alpha)[y\to 0]$ end up in~$C_0$ requires
  $\alpha \in [0;1-v(x)]\cap [0;1-v(y)]$;
\item having $(v+\alpha)[y\to 0]$ end up in~$C_1$ requires
  $\alpha \in (1-v(x);2-v(x)]\cap [0;1-v(y)]$.
\end{itemize}

We~end up with the following situations:
\begin{enumerate}
\item having both $(v+\alpha)[y:=0]$ and $(v+\beta)[y:=0]$ in~$C_0$
  can be performed from $\{(x,y) \mid 0\leq x\leq 1 \et 0\leq y\leq
  1\}$; from that zone:
  \begin{enumerate}
  \item\label{pt1a} if $x\leq y$, we~have $I_\alpha^v=I_\beta^v=[0;1-y]$;
  \item\label{pt1b} if $x > y$, we~have $I_\alpha^v=I_\beta^v=[0;1-x]$.
  \end{enumerate}
\item having both $(v+\alpha)[y:=0]$ and $(v+\beta)[y:=0]$ in~$C_1$
  can be performed from $\{(x,y) \mid 0\leq x\leq 2 \et 0\leq y\leq 1 \et
  y\leq x\}$; from that zone:
  \begin{enumerate}
  \item\label{pt2a} if $x\leq 1$ and $y<x$, we~have $I_\alpha^v=I_\beta^v=(1-x;1-y]$;
  \item\label{pt2b} if $1\leq x\leq 1+y$, we~have $I_\alpha^v=I_\beta^v=[0;1-y]$;
  \item\label{pt2c} if $1+y\leq x\leq 2$, we~have $I_\alpha^v=I_\beta^v=[0;2-x]$.
  \end{enumerate}
\item\label{pt3} having $(v+\alpha)[y:=0]\in C_0$ and $(v+\beta)[y:=0]\in C_1$
  can be performed from $\{(x,y) \mid 0\leq x\leq 1 \et 0\leq
  y\leq 1 \et y\leq x\}$. We~then have
  $I_\alpha^v=[0;1-x]$ and $I_\beta^v=(1-x,1-y]$.
\end{enumerate}

We now have to compute the optimal values of~$\alpha$ and~$\beta$ in
each of these six situations:
\begin{itemize}
\item[\eqref{pt1a}] for the first situation, we~have to maximize
  $(\alpha,\beta)\mapsto \min\{\beta-\alpha, x+\alpha,x+\beta\}$ over
  $\{(\alpha,\beta) \mid \alpha\in [0;1-y], \beta\in [0;1-y],
  \alpha\leq\beta\}$.
  This corresponds to case ``$a\geq 0$ and $c\geq 0$'' of
  Lemma~\ref{lemma-tech}. We~get:
  \begin{itemize}
  \item if $\frac{1-y-x}2\leq 0$, the optimal interval for the player
    is $[0;1-y]$, yielding permissiveness $1-y$;
  \item if $0\leq \frac{1-y-x}2$, the optimal interval is
    $[\frac{1-y-x}2;1-y]$, with permissiveness $\frac{1+x-y}2$.
  \end{itemize}

\item[\eqref{pt1b}] in the second situation, we maximize
  $(\alpha,\beta)\mapsto \min\{\beta-\alpha, x+\alpha,x+\beta\}$ over
  $\{(\alpha,\beta) \mid \alpha\in [0;1-x], \beta\in [0;1-x],
  \alpha\leq\beta\}$. The situation is the same as above, and we get:
  \begin{itemize}
  \item if $\frac12-x\leq 0$, the optimal interval for the player
    is $[0;1-x]$, yielding permissiveness $1-x$;
  \item if $0\leq \frac12-x$, the optimal interval is
    $[\frac12-x;1-x]$, with permissiveness $\frac12$.
  \end{itemize}

\item[\eqref{pt2a}] in the third situation, we maximize
  $(\alpha,\beta)\mapsto
  \min\{\beta-\alpha, 2-(x+\alpha),2-(x+\beta)\}$ over $\{(\alpha,\beta)
  \mid \alpha\in (1-x,1-y], \beta\in (1-x;1-y],
  \alpha\leq\beta\}$. We~apply Lemma~\ref{lemma-tech}, with $a\leq
  0$ and $c\leq 0$:
  \begin{itemize}
  \item the first condition corresponds to $x\leq \frac12+y$; there
    the maximal point is $\min\{x-y,1\}$, i.e. $x-y$, and is reached
    at $(1-x,1-y)$. Since the bound at $1-x$ is strict,
    we take $1-x+\varepsilon$ instead of~$1-x$, for some arbitrarily small~$\varepsilon>0$.
  \item the second condition is $x\geq \frac12+y$, for which the
    maximal value is~$\frac12$ is reached at $(1-x,\frac32-x)$. Again, we~have to take
    $1-x+\varepsilon$ instead of~$1-x$.
  \end{itemize}


\item[\eqref{pt2b}] in this case, we maximize the same function over
  $\{(\alpha,\beta) \mid \alpha\in [0,1-y], \beta\in [0;1-y],
  \alpha\leq\beta\}$  over the zone ($1\leq x\leq
  1+y\leq 2$):
  \begin{itemize}
  \item the first condition is $y\geq \frac x2$: in that zone, the maximal
    point is $1-y$, reached for $(0,1-y)$;
  \item the second condition is $y\leq \frac x2$,
    and the maximal value $1-\frac x2$ is reached at $(0,1-\frac x2)$.
  \end{itemize}
  

\item[\eqref{pt2c}] we maximize the same function over
  $\{(\alpha,\beta) \mid \alpha\in [0,2-x], \beta\in [0;2-x],
  \alpha\leq\beta\}$. Again, the second condition holds, and the
  maximal value is $1-\frac x2$, reached at $(0,1-\frac x2)$.


\item[\eqref{pt3}] finally, we~have to maximize $(\alpha,\beta)
  \mapsto \min\{\beta-\alpha, x+\alpha, 2-x-\beta\}$ over
  $\{(\alpha,\beta) \mid \alpha\in [0,1-x], \beta\in (1-x;1-y],
    \alpha\leq\beta\}$. Hence we are in case $a\geq 0$ and $c\leq 0$
    of Lemma~\ref{lemma-tech}. In no cases can the first and second
    condition hold. Then:
    \begin{itemize}
    \item when $y\geq 1-x$ and $y\geq \frac x2$, then the third
      condition holds, and the maximal value $1-y$ is reached at $(0,1-y)$.
      \medskip

      Now, let $T_x=\frac23-x$ and $T_y=\frac43-x$.
    \item the fourth condition rewrites as $y\geq x-\frac13$ (and the
      complement of the previous condition). For those points, the
      maximal value is $\frac{1+x-y}2$, reached at
      $(\frac{1-x-y}2,1-y)$.
    \item the fifth condition is $x\geq \frac23$ (and the complement
      of the condition above). There the maximal point $1-\frac x2$ is
      reached for $(0,1-\frac x2)$.
    \item for the remaining points: we have $ad=2-x\geq -x=bc$, and
      neither $T_y\leq m_y$ nor $T_x\geq M_x$ hold, so that the
      optimal point is $2/3$, corresponding to $(\frac23-x,
      \frac43-x)$.
    \end{itemize}
\end{itemize}

By superimposing those results and taking the maxima on cells where
several solutions have been computed, we~get the global permissiveness
function depicted on Fig.~\ref{fig-permexlong-app}.

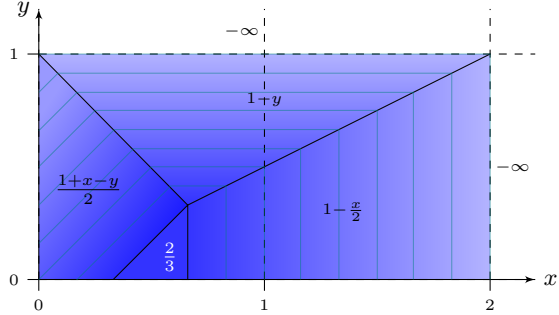
\begin{figure}[!ht]
  \centering
  \begin{tikzpicture}
    \begin{scope}[xshift=-1cm,yshift=-3cm,scale=3]
      \fill[fill=blue!80!white] (.33,0) -| (.66,.33) -- cycle;
      \shade[left color=blue!80!white, right color=blue!30!white]
        (.66,0) -- (.66,.33) -- (2,1) |- cycle;
      \shade[shading=axis,bottom color=blue!90!white,top color=blue!40!white,shading angle=45]
      (0,0) -- (.33,0) -- (.66,.33) -- (0,1) -- cycle;
      \shade[bottom color=blue!80!white,top color=blue!30!white]
      (0,1) -- (.66,.33) -- (2,1) -- cycle;

      \path[use as bounding box] (0,1.2) -- (0,-.1);

      \draw[latex'-latex'] (2.2,0) node[right] {$x$} -| (0,1.2)
        node[left] {$y$};
      \begin{scope}
        \path[clip] (-.5,-.5) rectangle (2.2,1.2);
      \foreach \x in {0,1,2}
           {\draw (\x,0) -- +(0,-.05) node[below] {$\scriptstyle \x$};
            \draw (0,\x) -- +(-.05,0) node[left] {$\scriptstyle \x$};
            \draw[dashed] (\x,0) -- +(0,1.2);
            \draw[dashed] (0,\x) -- +(2.2,0);
           }
      \draw (.33,0) -- (.66,.33) -- (.66,0);
      \draw (0,1) -- (.66,.33) -- (2,1);
      \path[clip] (0,0) rectangle (2.2,1.2);
      \foreach \x in {.83,1,1.16,1.33,1.5,1.66,1.83,2} 
        {\draw[blue!50!green,opacity=.5] (\x,0) -- (\x,\x/2) -- (1-\x/2,\x/2) -- (0,\x-1);}
      \end{scope}
      
      \draw (.58,.1) node[color=white] {$\frac23$};
      \path (1.35,.3) node {$\scriptstyle 1-\frac x2$};
      \draw (1,.8) node {$\scriptstyle 1-y$};
      \draw (.22,.4) node {$\frac{1+x-y}2$};
      \draw (.9,1.1) node {$\scriptstyle -\infty$};
      \draw (2.1,.5) node {$\scriptstyle -\infty$};
    \end{scope}
  \end{tikzpicture}
  \caption{A linear timed automaton and its permissiveness at~$\loc_0$}
  \label{fig-permexlong-app}
\end{figure}

\end{example}

\section{Proof of Lemma~\ref{lemma-technique}}
\label{appendix-proof-tech}

\lemmatech*


\begin{proof}
We~write~$h$ for the function $(\alpha,\beta) \mapsto \beta-\alpha$.
We~assume that~$D$ is non-empty (i.e., $m_\alpha\leq M_\beta$).  We split the
  proof into four cases, depending on the signs of~$a$ and~$c$.

\paragraph{\llap{$\blacktriangleright$\ }When $a\leq 0$ and $c\geq 0$,}
then all three functions defining~$\mu$ are maximized when $\alpha=m_\alpha$ and
$\beta=M_\beta$. It~follows that the maximal value of~$\mu$ over~$D$ is
$\min\{M_\beta-m_\alpha, f(m_\alpha), g(M_\beta)\}$, and is reached at $(m_\alpha,M_\beta)$.

\begin{figure}[ht]
\centering
\begin{tikzpicture}
\begin{scope}[scale=1.3]
\fill[black!10!white] (2,0) -- (.8,.8) -- (0,.7) |- cycle;
\fill[red!50!white] (0,.7) -- (.8,.8) -- (0,1.8) |- cycle;
\fill[blue!50!white] (0,1.8) -- (.8,.8) -- (2,0) -- (2,2) -| cycle;
\draw[latex'-latex'] (2.2,0) node[right] {$\alpha$} -| (0,2.2) node[left] {$\beta$};
\draw (0,1.8) -- (.8,.8) -- (0,.7);
\draw (.8,.8) -- (2,0);
\everymath{\scriptstyle}
\path (.3,1) node {$g$}; \draw[->] (.1,.9) -- +(90:4mm);
\path (1.5,.8) node {$f$}; \draw[->] (1.5,1.7) -- +(180:4mm);
\path (.2,.4) node {$h$}; \draw[->] (.7,.2) -- +(135:4mm);
\fill[black,opacity=.2] (.8,.6) -- (1.4,1.2) |- (.5,1.5) |- cycle;
\draw[dotted] (.8,.6) -- (1.4,1.2) |- (.5,1.5) |- cycle;
\path (1.3,1.3) node {$D$};
\end{scope}
\begin{scope}[xshift=4cm,scale=1.3]
\fill[blue!50!white] (0,.5) -- (.8,.8) -- (2,2) -| cycle;
\fill[red!50!white] (2,1) -- (.8,.8) -- (2,2) -- cycle;
\fill[black!10!white] (0,.5) -- (.8,.8) -- (2,1) -- (2,0) -| cycle;
\draw[latex'-latex'] (2.2,0) node[right] {$\alpha$} -| (0,2.2) node[left] {$\beta$};
\draw (0,.5) -- (.8,.8) -- (2,1);
\draw (.8,.8) -- (2,2);
\everymath{\scriptstyle}
\path (.3,1.2) node {$f$}; \draw[->] (.7,1.8) -- +(0:4mm);
\path (1.8,1.4) node {$g$}; \draw[->] (1.6,1) -- +(90:4mm);
\path (1.5,.4) node {$h$}; \draw[->] (1.2,.1) -- +(135:4mm);
\fill[black,opacity=.2] (.8,.6) -- (1.4,1.2) |- (.5,1.5) |- cycle;
\draw[dotted] (.8,.6) -- (1.4,1.2) |- (.5,1.5) |- cycle;
\path (1.1,1.3) node {$D$};
\end{scope}
\begin{scope}[xshift=8cm,scale=1.3]
\fill[blue!50!white] (0,0.3) -- (.8,.8) -- (0,1.6) -- cycle;
\fill[red!50!white] (0,1.6) -- (.8,.8) -- (2,1.2) |- (0,2) -- cycle;
\fill[black!10!white] (0,.3) -- (.8,.8) -- (2,1.2) |- (0,0) -- cycle;
\draw[latex'-latex'] (2.2,0) node[right] {$\alpha$} -| (0,2.2) node[left] {$\beta$};
\draw (0,.3) -- (.8,.8) -- (2,1.2);
\draw (.8,.8) -- (0,1.6);
\everymath{\scriptstyle}
\path (.2,1) node {$f$}; \draw[->] (.1,.8) -- +(0:4mm);
\path (1.6,.5) node {$h$}; \draw[->] (1.2,.2) -- +(135:4mm);
\path (1.6,1.7) node {$g$}; \draw[->] (.6,1.8) -- +(-90:4mm);
\begin{scope}[xshift=4mm]
\fill[black,opacity=.2] (.8,.6) -- (1.4,1.2) |- (.5,1.5) |- cycle;
\draw[dotted] (.8,.6) -- (1.4,1.2) |- (.5,1.5) |- cycle;
\path (1.1,1.3) node {$D$};
\end{scope}
\end{scope}
\end{tikzpicture}
\caption{Four cases for the proof of Lemma~\ref{lemma-technique}: when
  $a\leq 0$ and $c\geq 0$ (left); when $a\geq 0$ and $c\geq 0$, and
  symmetrically, $a\leq 0$ and $c\leq 0$ (center); when $a\geq 0$ and
  $c\leq 0$ (right).  In~each picture, the state space
  $\protect\bbR+^2$ is divided into three cells, depending on which
  function is minimal among~$f$, $g$ and~$h$. In~each cell, we~also
  indicate the direction in which those functions increase.}
\label{fig-lmtech}
\end{figure}
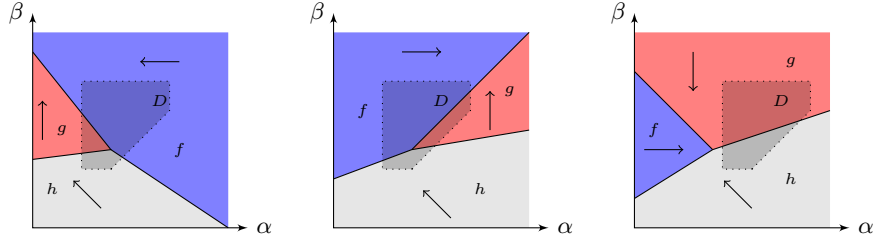

\paragraph{\llap{$\blacktriangleright$\ }When $a\geq 0$ and $c\geq 0$,}
then for two points~$(\alpha,\beta)$ and~$(\alpha',\beta')$ in~$D$
with~$\beta\leq \beta'$, it holds $(\alpha,\beta')\in D$ and we~have
$\mu(\alpha,\beta) \leq \mu(\alpha,\beta')$. Hence $\mu$ is maximized
over~$D$ at a point where $\beta=M_\beta$. It~remains to
maximize~$\alpha\mapsto \mu(\alpha,M_\beta)$ over $\{\alpha\in\bbR\mid
(\alpha,M_\beta)\in D\}$. Over~$\bbR$, this function is maximized for
$\alpha_0=\frac{M_\beta-b}{a+1}$, where
$\mu(\alpha_0,M_\beta)=\min\{\frac{a\cdot M_\beta+b}{a+1}, c\cdot
M_\beta+d\}$.  If $(\alpha_0,M_\beta)\in D$, this is the maximum
of~$\mu$ over~$D$, otherwise the maximum is reached on the border of
$\{\alpha\in\bbR\mid {(\alpha,M_\beta)\in D}\}$,
i.e. for~$\alpha=m_\alpha$ or $\alpha=\min\{M_\beta, M_\alpha\}\}$.

\paragraph{\llap{$\blacktriangleright$\ }When $a\leq 0$ and $c\leq 0$,}
then by letting $\alpha'=-\beta$ and $\beta'=-\alpha$, the problem is
transformed into that of maximizing
$\mu'(\alpha',\beta')=\min\{\beta'-\alpha', -a\beta'+b, -c\alpha'+d\}$
over $D'=\{(\alpha',\beta') \mid -M_\alpha\leq \beta'\leq -m_\alpha,\
-M_\beta\leq \alpha'\leq -m_\beta,\ \alpha'\leq \beta'\}$.  Now
$-c\geq 0$ and $-a\geq 0$, and we have reduced this case to the
previous one.

\paragraph{\llap{$\blacktriangleright$\ }When $a\geq 0$ and $c\leq 0$,}
we~split~$\bbR+^2$ into three convex polyhedra, depending on which
of~$h(\alpha,\beta)$, $f(\alpha)$ and $g(\beta)$ is minimal, and look at the position
of~$D$ w.r.t. these polyhedra (see~right of Fig.~\ref{fig-lmtech}):



\begin{itemize}
\item if the upper-right point $(\min\{M_\alpha,M_{\beta}\},M_{\beta})$ is in the
  polyhedron where $f$ is minimal (which can be checked by computing
  the values of $f(\min\{M_\alpha,M_{\beta}\})$, $g(M_{\beta})$, and $h(\min\{M_\alpha,M_{\beta}\},M_{\beta})$),
  then it realizes the maximum over~$D$;
\item similarly, if the lower-left point $(m_\alpha,\max\{m_\alpha,m_\beta\})$ is in the
  polyhedron where $g$ is minimal, then again it realizes the maximum
  over~$D$;
\item similarly, if the upper-left corner $(m_\alpha,M_{\beta})$ is in the polyhedron where
  $h$ is minimal, it realizes the maximum over~$D$;
\item if none of the above cases apply, we consider the tripoint of
  the diagram, whose coordinates are $(\frac{d-b(1-c)}{(a+1)(1-c)-1},
  \frac{d(a+1)-b}{(a+1)(1-c)-1})$. Write $T_\alpha$ and~$T_\beta$ for those coordinates.
  \begin{itemize}
  \item if $T_\beta>M_{\beta}$, then the maximal point is at
    $(\frac{M_{\beta}-b}{a+1},M_{\beta})$;
  \item if $T_\alpha< m_\alpha$, then the maximal point is at
    $(m_\alpha,\frac{m_\alpha+d}{1-c})$;
  \item if $T_\beta<T_\alpha$, i.e. $da<bc$, then $(m_\alpha,m_\beta)$ must be in the
    polyhedron where $f$ is minimal among the three functions, and
    $(M_\alpha,M_{\beta})$ must be in that where $g$ is minimal. We then are in one
    of the following three situations:
    \begin{itemize}
    \item if $(\min(m_\beta,M_\alpha),m_\beta)$ is in the polyhedron where $g$ is
      minimal, then the maximal point is at $(\frac{cm_\beta+d-b}{a}, m_\beta)$;
    \item otherwise, if $(M_\alpha,\max(m_\beta,M_\alpha))$ is in the polyhedron where
      $g$ is minimal, then the maximal point is at
      $(\frac{b-d}{c-a},\frac{b-d}{c-a})$;
    \item otherwise, the maximal point is at $(M_\alpha,\frac{aM_\alpha+b-d}c)$;
    \end{itemize}
  \item finally, if $T_\beta\geq T_\alpha$, we~again have three different cases:
    \begin{itemize}
    \item if $T_\alpha>M_\alpha$, then the maximal point $f(M_\alpha)$
      is at any point between $(M_\alpha,(a+1)M_\alpha+b)$ and $(M_\alpha,\frac{aM_\alpha+b-d}c)$;
    \item if $T_\beta<m_\beta$, then the maximal point $g(m_\beta)$ is reached at any point
      between $(\frac{cm_\beta+d-b}a,m_\beta)$ and $(m_\beta(1-c)-d,m_\beta)$;
    \item otherwise, the maximal point $\frac{da-bc}{(a+1)(1-c)-1}$
      is reached at $(T_\alpha,T_\beta)$.
    \end{itemize}
  \end{itemize}
\end{itemize}
\end{proof}

Table~\ref{table-lemmatech} summarizes the various cases for computing
the maximum value of function
$(\alpha,\beta)\mapsto \min\{\beta-\alpha,a\alpha+b,c\beta+d\}$ over
$D=\{(\alpha,\beta)\in\bbR+^2\mid m_\alpha\leq \alpha\leq M_\alpha,\
m_\beta\leq \beta\leq M_\beta, \alpha\leq \beta\}$.

\begin{table}
\def\arraystretch{1.5}
\noindent\hskip-.1\linewidth
\begin{minipage}{1.2\linewidth}
\paragraph{when $a\leq 0$ and $c\geq 0$: }
\[
\begin{array}{|c|c|}
\hline
\quad\text{coordinates of maximal point}\quad &
\quad\text{value of maximal point}\quad \\
\hline
(m_\alpha, M_\beta) &  \min\{M_\beta-m_\alpha, am_\alpha+b, cM_\beta+d\} \\
\hline
\end{array}
\]

\paragraph{when $a\geq 0$ and $c\geq 0$: }
\[
\begin{array}{|c|c|c|}
\hline
\text{Condition} &
\quad\text{coordinates of maximal point}\quad\null &
\quad\text{value of maximal point}\quad\null \\
\hline
\frac{M_\beta-b}{a+1}\leq m_\alpha
  & (m_\alpha, M_\beta) &  \min\{M_\beta-m_\alpha,cM_\beta+d\} \\
\hline
m_\alpha\leq \frac{M_\beta-b}{a+1}\leq \min\{M_\alpha,M_\beta\}
  & (\frac{M_\beta-b}{a+1}, M_\beta) &  \min\{\frac{aM_\beta+b}{a+1}, cM_\beta+d\} \\
\hline
\min\{M_\alpha,M_\beta\} \leq \frac{M_\beta-b}{a+1}
  & (\min\{M_\alpha,M_\beta\}, M_\beta) &  \min\{aM_\alpha+b,aM_\beta+b,cM_\beta+d\} \\
\hline
\end{array}
\]

\paragraph{when $a\leq 0$ and $c\leq 0$: }
\[
\begin{array}{|c|c|c|}
\hline
\text{Condition} &
\quad\text{coordinates of maximal point}\quad\null &
\quad\text{value of maximal point}\quad\null \\
\hline
M_\beta\leq \frac{m_\alpha+d}{1-c}
  & (m_\alpha, M_\beta) &  \min\{M_\beta-m_\alpha,am_\alpha+b\} \\
\hline
\max\{m_\alpha,m_\beta\}\leq \frac{m_\alpha+d}{1-c}\leq M_\beta
  & (m_\alpha, \frac{m_\alpha+d}{1-c}) &  \min\{\frac{cm_\alpha+d}{1-c}, am_\alpha+b\} \\
\hline
 \frac{m_\alpha+d}{1-c} \leq \max\{m_\alpha,m_\beta\}
  & (m_\alpha, \max\{m_\alpha,m_\beta\}) &  \min\{am_\alpha+b, cm_\beta+d, cm_\alpha+d\} \\
\hline
\end{array}
\]

\paragraph{when $a\geq 0$ and $c\leq 0$: }
\def\mctwo#1&{\multicolumn{2}{|c|}{#1}&}
\[
\begin{array}{|c|c|c|c|}
\hline
\mctwo \text{Condition} &
\quad\text{coordinates of maximal point}\quad\null &
\quad\text{value of maximal point}\quad\null \\
\hline
\mctwo f\leq g, h\text{ at }(\min\{M_\alpha,M_\beta\},M_\beta) & (\min\{M_\alpha,M_\beta\},M_\beta) & \min\{aM_\alpha+b,aM_\beta+b\}) \\ 
\hline
\mctwo g\leq f,h\text{ at }(m_\alpha,\max\{m_\alpha,m_\beta\}) & (m_\alpha,\max\{m_\alpha,m_\beta\}) & \min\{cm_\alpha+d,cm_\beta+d\}
\\\hline
\mctwo h\leq f,g\text{ at }(m_\alpha,M_\beta) & (m_\alpha,M_\beta) & M_\beta-m_\alpha
\\\hline
\multicolumn{4}{l}{\text{If none of the above conditions hold: let }
  T_\alpha=\frac{d-b(1-c)}{(a+1)(1-c)-1}
  \text{ and }
  T_\beta=\frac{d(a+1)-b}{(a+1)(1-c)-1}}
\\\hline
\mctwo T_\beta\geq M_\beta & (\frac{M_\beta-b}{a+1},M_\beta) & \frac{aM_\beta+b}{a+1}
\\\hline
\mctwo T_\alpha\leq m_\alpha & (m_\alpha,\frac{m_\alpha+d}{1-c}) & \frac{cm_\alpha+d}{1-c}
\\\hline
 & g\leq f,h\text{ at }(\min\{m_\beta,M_\alpha\},m_\beta) & (\frac{cm_\beta+d-b}{a},m_\beta) & cm_\beta+d 
\\\cline{2-4}
ad\leq bc & g\leq f,h\text{ at }(M_\alpha,\max\{m_\beta,M_\alpha\}) & (\frac{d-b}{a-c},\frac{d-b}{a-c}) & \frac{ad-bc}{a-c} 
\\\cline{2-4}
& \text{otherwise} & (M_\alpha,\frac{aM_\alpha+b-d}{c}) & aM_\alpha+b
\\\hline
& T_\beta\leq m_\beta & ((1-c)m_\beta-d,m_\beta) & cm_\beta+d
\\\cline{2-4}
ad\geq bc & T_\alpha\geq M_\alpha & (M_\alpha,(a+1)M_\alpha+b) & aM_\alpha+b
\\\cline{2-4}
& \text{otherwise} & (T_\alpha,T_\beta) & \frac{ad-bc}{(a+1)(1-c)-1}
\\\hline
\end{array}
\]
\end{minipage}
\medskip
\caption{Solutions to the optimization problem of Lemma~\ref{lemma-technique}}
\label{table-lemmatech}
\end{table}

\end{document}